\renewcommand\@biblabel[1]{#1.} 
\newtheorem{lemma}{Lemma}
\newtheorem{proof}{Proof}
\newtheorem{proposition}{Proposition}
\newtheorem{theorem}{Theorem}
\newtheorem{definition}{Definition}
\newtheorem{corollary}{Corollary}
\newenvironment{sciabstract}{ 
\begin{quote} \bf }
{\end{quote}}
\newcounter{lastnote}
\definecolor{Blue}{rgb}{0,0,1}
\definecolor{Orange}{rgb}{1,0.5,0}
\definecolor{Green}{rgb}{0,1,0}
\title{\vspace{-2cm}
Contextual centrality: going beyond network structure}
\author
{
 Yan Leng $^{1\dagger}$, Yehonatan Yella$^{2}$, Rodrigo Ruiz$^{1}$, Alex Pentland$^{1}$  \vspace{.25cm} \\
\footnotesize{$^{1}$Massachusetts Institute of Technology, Cambridge, MA, USA} \vspace{-.15cm} \\
\footnotesize{$^{2}$Albert Einstein College of Medicine, New York, USA} \vspace{-.15cm} \\
\footnotesize{$^\dagger$ To whom correspondence should be addressed; E-mail:  yleng@mit.edu} 
}
\date{}
\newlist{SubItemList}{itemize}{1}
\setlist[SubItemList]{label={$-$}}
\let\OldItem\item
\newcommand{\SubItemStart}[1]{%
    \let\item\SubItemEnd
    \begin{SubItemList}[resume]%
        \OldItem #1%
}
\newcommand{\SubItemMiddle}[1]{%
    \OldItem #1%
}
\newcommand{\SubItemEnd}[1]{%
    \end{SubItemList}%
    \let\item\OldItem
    \item #1%
}
\newcommand*{\SubItem}[1]{%
    \let\SubItem\SubItemMiddle%
    \SubItemStart{#1}%
}%
\begin{document} 

\baselineskip24pt


\maketitle

\vspace{0mm}
\textbf{
Centrality is a fundamental network property which ranks nodes by their structural importance. However, this alone may not predict successful diffusion in many applications, such as viral marketing and political campaigns. We propose contextual centrality, which integrates structural positions, the diffusion process, and, most importantly, relevant node characteristics. It nicely generalizes and relates to standard centrality measures. We test the effectiveness of contextual centrality in predicting the eventual outcomes in the adoption of microfinance and weather insurance. Our empirical analysis shows that the contextual centrality of first-informed individuals has higher predictive power than that of other standard centrality measures. Further simulations show that when the diffusion occurs locally, contextual centrality can identify nodes whose local neighborhoods contribute positively. When the diffusion occurs globally, contextual centrality signals whether diffusion may generate negative consequences.  Contextual centrality captures more complicated dynamics on networks and has significant implications for network-based interventions.
}

Individuals, institutions, and industries are increasingly connected in networks where the behavior of one individual entity may generate a global effect \cite{jackson2010social, liu2017industrial,acemoglu2015systemic}. Centrality is a fundamental network property which captures an entity's ability to impact macro processes, such as information diffusion on social networks \cite{jackson2010social}, cascading failures in financial institutions \cite{acemoglu2015systemic}, and the spreading of market inefficiencies across industries \cite{liu2017industrial}. Many interesting studies have found that the structural positions of individual nodes in a network explain a wide range of behaviors and consequences. Degree centrality predicts who is the first to be infected in a contagion \cite{christakis2010social}. Eigenvector centrality corresponds to the incentives to maximize social welfare \cite{galeotti2017targeting}. Katz centrality is proportional one's power in strategic interactions in network games \cite{ballester2006s}. Diffusion centrality depicts an individual's capability of spreading in information diffusion \cite{banerjee2014gossip}. These centrality measures operate similarly, aiming to reach a large crowd via diffusion, and are solely dependent on the network structure. 

However, several pieces of empirical evidence show that reaching a large crowd may generate adverse effects. For example, sales on Groupon, public announcements of popular items on Goodreads, and online video platforms are successful in reaching a large population of customers. However, studies show that these strategies lower online reviews by reaching the population of people who hold negative opinions about the product \cite{byers2012groupon, kovacs2014paradox, aizen2004traffic}. Let us further consider two motivating examples to demonstrate the importance of accounting for node characteristics.

\emph{Example 1. Viral marketing. } During a viral marketing campaign, the marketing department aims to attract more individuals to adopt the focal product. If we have ex-ante information about the customers' likelihood of adoption, we can utilize it to better target individuals who have higher chances of adoption and avoid wasting resources on those otherwise. 

\emph{Example 2. Political campaign.}  Typical Get-Out-The-Vote (GOTV) campaigns include direct mail, phone calls,  and social-network advertisement \cite{green2015get,bond201261}. However, rather than merely aiming to transform nonvoters to voters,  a GOTV strategy should target voters  who are more likely to vote for the campaigner's candidate.

In this paper, we introduce contextual centrality, which builds upon diffusion centrality proposed in Banerjee et al. and  captures relevant node characteristics in the objective of the diffusion \cite{banerjee2013diffusion,banerjee2014gossip}. Moreover, contextual centrality aggregates these characteristics over one's neighborhood, which is defined by the diffusion process. It generalizes and nests degree, eigenvector, Katz, and diffusion centrality. When the spreadability (the product between the diffusion rate $p$ and the largest eigenvalue $\lambda_1$ of the adjacency matrix) and the diffusion period $T$ are large,  contextual centrality linearly scales with eigenvector, Katz, and diffusion centrality. The sign of the scale factor is determined by the joint distribution of nodes' contributions to the objective of the diffusion and their corresponding structural positions. 

We perform an empirical analysis of the diffusion of microfinance and weather insurance showing that the contextual centrality of the first-informed individuals better predicts the adoption decisions than that of the other centrality measures mentioned above. Moreover, simulations on the synthetic data show how network properties and node characteristics collectively influence the performance of different centrality measures. Further, we illustrate the effectiveness of contextual centrality over a wide range of diffusion rates with simulations on the real-world networks and relevant node characteristics in viral marketing and political campaigns. 


\section*{Contextual centrality}

Given a set of $N$ individuals, the adjacency matrix of the network is $\mathbf{A}$, an $N$-by-$N$ symmetric matrix. The entry ${A}_{ij}$ equals 1 if there exists a link between node $i$ and node $j$, and 0 otherwise.  Let $\mathbf{D} = \text{diag}(\mathbf{d})$, where $d_i = \sum_{j=1}^N {A}_{ij}$ denotes the degree of node $i$. With Singular Value Decomposition, we have $\mathbf{A} =\mathbf{U} \boldsymbol{\Lambda} \mathbf{U}^{T}$, where $\boldsymbol{\Lambda}  =  \text{diag} \{ \boldsymbol{\Lambda} \}= \{\lambda_1, \lambda_2, ...,\lambda_n\}$ in a descending order and the corresponding eigenvectors are $\{\mathbf{U}_1, \mathbf{U}_2, ..., \mathbf{U}_n\}$ with  $\mathbf{U}_1$ being the leading eigenvector. We let $\circ$ denote the Hadamard product (i.e., element-wise multiplication). We use bold lowercase variables for vectors and bold upper case variables for matrices. 

The diffusion process in this paper follows the independent cascade model \cite{kempe2003maximizing}. It starts from an initial active seed. When node $u$ becomes active, it has a single chance to activate each currently inactive neighbor $v$ with probability ${P}_{uv}$, where $\mathbf{P} \in \mathbb{R}^{N \times N}$. We follow the terminology by Koschutzki to categorize degree, eigenvector, and Katz centrality as reachability-based centrality measures \cite{koschutzki2005centrality}. 
Reachability-based centrality measures aim to score a certain node $v$ by the expected number of individuals activated if $v$ is activated initially, $s (v, \mathbf{A}, \mathbf{P})$, and hence tend to rank nodes that can reach more nodes in the network higher. In particular, 
\begin{equation}
 s (v, \mathbf{A}, \mathbf{P} )  =  \sum_{i}^N r_i (v,  \mathbf{A}, \mathbf{P} ),
\label{obj:cascade_size}
\end{equation}
where ${r}_i (v, \mathbf{A}, \mathbf{P})$ denotes the probability that $i$ is activated if $v$ is initially activated\cite{kempe2003maximizing, wen2017online, lynn2016maximizing}. In practice, $ s (v, \mathbf{A}, \mathbf{P} )$ is hard to estimate. Different reachability-based centrality measures estimate it in different ways. 

Diffusion centrality extends and generalizes these standard centrality measures \cite{banerjee2013diffusion}. It operates on the assumption that the activation probability of an individual $i$ is correlated with the number of times $i$ `hears' the information originating from the individual to be scored. Diffusion centrality measures how extensively the information spreads as a function of the initial node \cite{banerjee2013diffusion}. In other words, diffusion centrality scores node $v$ by the expected number of times  some piece of information originating from $v$ is heard by others within a finite number of time periods $T$ , $s' (v, \mathbf{A},\mathbf{P}, T)$,  
\begin{equation}
s' (v, \mathbf{A},\mathbf{P}, T)  =  \sum_{i}^N r'_i (v, \mathbf{A}, \mathbf{P}, T ),
\label{obj:cascade_size2}
\end{equation}
where $r'_i (v,\mathbf{A}, \mathbf{P}, T)$ is the expected number of times individual $i$ receives the information if $v$ is seeded. Eq.~(\ref{obj:cascade_size2}) has at least two advantages over Eq.~(\ref{obj:cascade_size}). First, $r'_i (v,\mathbf{A}, \mathbf{P}, T)$ is computationally more efficient than tedious simulations to get $r_i (v,\mathbf{A}, \mathbf{P})$. Second, it nests degree, eigenvector, and Katz centrality \cite{banerjee2014gossip}\footnote{
It is worth noting that Eq.~(\ref{obj:cascade_size}) and (\ref{obj:cascade_size2}) differ in a couple of ways.  First, since $r'_i (v,  \mathbf{A}, \mathbf{P}, T)$ is the expected number of times $i$ hears a piece of information, it may exceed 1. Meanwhile, since $r_i (v, \mathbf{A}, \mathbf{P})$ is the probability that $i$ receives the information, it is bounded by 1. Second, in independent cascade model, each activated individual has a single chance to activate the non-activated neighbors. However, with the random walks of information transmission in contextual centrality, each activated individual has multiple chances with decaying probability to activate their neighbors.}. 

Both Eq.~(\ref{obj:cascade_size}) and (\ref{obj:cascade_size2}) assume that individuals are homogeneous and contribute equally to the objectives if they have been activated. However, in many real-world scenarios, such as the two examples mentioned above, the payoff for the campaigner does not grow with the size of the cascade. Instead, different nodes contribute differently. Formally, let $y_i$ be the contribution of individual $i$ to the cascade payoff upon being activated. Note that  $y_i$ is context-dependent and is measured differently in different scenarios. For example, in a market campaign, $y_i$ can be $i$'s likelihood of adoption. In a political campaign, $y_i$ can be the likelihood that $i$ votes for the political campaigner's political party. With the independent cascade model, an individual $v$ should be scored according to the cascade payoff if $v$ is first-activated, $ s_c (v, \mathbf{A}, \mathbf{p})$. With this, we present the following equation as a generalization and extension to Eq.~(\ref{obj:cascade_size}) with heterogeneous $\mathbf{y}$, 
\begin{equation}
\text{cascade payoff:\;\;}s_c (v, \mathbf{A}, \mathbf{p})  = \sum_{i}^N r_i (v, \mathbf{A}, \mathbf{P} ) y_i. 
\label{obj:cascade_payoff}
\end{equation}

Similar to  diffusion centrality, we score nodes with the following approximated cascade payoff, $s'_c (v, \mathbf{A}, \mathbf{p}, T)$, with heterogeneous $\mathbf{y}$, 
\begin{equation}
 \text{approximated cascade payoff:\;\;}s'_c (v, \mathbf{A}, \mathbf{P}, T)  = \sum_{i}^N r'_i (v, \mathbf{A}, \mathbf{P}, T) y_i. 
\label{obj:cascade_payoff_2}
\end{equation}

This formulation generalizes diffusion centrality and inherits its nice properties in nesting existing reachability-based centrality measures. Moreover, it is more easier to compute than Eq.~(\ref{obj:cascade_payoff})\footnote{The computational complexity of the algorithm to score according to Eq.~(\ref{obj:cascade_payoff}) is $O(NM^T)$, where $M$ is the average degree and $T$ is the paths lengths that has been inspected. Note that the computational complexity of the formulation~(\ref{ic_def}) is $O(N M T)$. We repeat the operation of multiplying a vector of length $N$ with a sparse matrix which has an average of $M$ entries per row for $T$ times. This significantly reduces the run time.}. With this scoring function, we now formally propose contextual centrality. 
\begin{definition}
Contextual centrality (CC) approximates the cascade payoff  within a given number of time periods $T$ as a function of the initial node accounting for individuals' contribution to the cascade payoff.  
\begin{equation}
\text{CC}(\mathbf{A}, \mathbf{P}, T, \mathbf{y})  :=\sum_{t=0}^{T}(\mathbf{P} \circ \mathbf{A})^{t}\mathbf{y}, 
\label{ic_def}
\end{equation}
\end{definition}

Heterogeneous diffusion rates across individuals are difficult to collect and estimate in real-world applications. Therefore, in the following analysis, we assume a homogeneous diffusion rate $p$ across all edges. Hence, $\mathbf{P} \circ \mathbf{A}$ in Eq.~(\ref{ic_def}) is reduced to ${p} \mathbf{A}$. Similar to diffusion centrality, contextual centrality is a random-walk-based centrality, where $(p \mathbf{A})^t$ measures the expected number of walks of length $t$ between each individual pair and $T$ is the maximum walk-length considered. Since $T$ is the longest communication period, larger $T$ indicates a longer period for diffusion (e.g., a movie that stays in the market for a long period) while smaller $T$ indicates a shorter diffusion period (e.g., a coupon that expires soon). 
One the one hand, when $p\lambda_1$ is larger than $1$,  CC approaches infinity as $T$ grows. On the other hand, when $p\lambda_1 < 1$, CC is finite for $T=\infty$, which can be understood as a lack of virality, expressed in a fizzling out of the diffusion process with time. In fact, the specific value of $p \lambda_1$ can be used to bound the maximum possible CC given the norm of the score vector $\textbf{y}$. As presented in proposition \ref{ccineq} in the Supporting Information, the upper bound for CC grows with $p \lambda_1$ and the norm of the score vector. 

Let us further illustrate the relationship between CC and diffusion centrality, DC for short\footnote{In Banerjee et al.\cite{banerjee2013diffusion}, $\text{DC} = \sum_{t=1}^T (p \mathbf{A} )^t$. To derive the following relationship between CC and DC, we add the score of reaching the first seeded individual into computing diffusion centrality. Hence, $\text{DC} = \sum_{t=0}^T (p \mathbf{A} )^t$. Adding the first seeded individual into the scoring function produces the same ranking as the one used in Banerjee et al. }. We can represent $\mathbf{y}$ as, $\mathbf{y} = \sigma(\mathbf{y}) \cdot \mathbf{z} + \overline{\mathbf{y}} \cdot \mathbf{1}$, where $\sigma(\mathbf{y})$ and $\mathbf{z}$ are the standard deviation and the z-score normalization of $\mathbf{y}$. Using the linearity of CC with respect to $\mathbf{y}$, we can write
\begin{equation}
\begin{split}
\text{CC}(\mathbf{A}, p, T, \mathbf{y}) & = \sigma(\mathbf{y}) \cdot \text{CC}(\mathbf{A}, p, T, \mathbf{z}) + \overline{\mathbf{y}} \cdot \underbrace{\text{CC}(\mathbf{A}, p, T, \mathbf{1}}_\text{DC}) 
\label{eq:std_mean_tradeoff}
\end{split}
\end{equation}

Eq.~(\ref{eq:std_mean_tradeoff}) shows the trade-off between the standard deviation $\sigma(\mathbf{y})$ and the mean $\overline{\mathbf{y}}$ of the contribution vector in CC.  When $\overline{\mathbf{y}}$ dominates over $\sigma(\mathbf{y})$, network topology is more important in CC and it produces similar rankings to DC. If the graph is Erdos-Renyi and T is small enough, then, on expectation, the term $\overline{\mathbf{y}} \cdot \text{DC}$ dominates as the size of the network approaches infinity, as presented in Theorem~\ref{expccdc} in the Supporting Information.  However, when $\sigma(\mathbf{y})$ dominates over $\overline{\mathbf{y}}$, CC and DC generate very different rankings. 

The relevant node characteristics $\mathbf{y}$ provides the ex-ante estimation about one's contribution. Whether to incorporate $\mathbf{y}$ is the main difference between our centrality and existing centrality measures. In the real-world data, the observation or estimation on $\mathbf{y}$ can be noisy, biased or stochastic. Therefore, we discuss the robustness of contextual centrality in responses to perturbations in $\mathbf{y}$ in the Supporting Information.

We define the following terms, which we use throughout the paper:
\begin{itemize}[topsep=0pt,itemsep=-1ex,partopsep=1ex,parsep=1ex]
    \item Spreadability $(p \lambda_1)$ captures the capability of the campaign to diffuse on the network depending on the diffusion probability ($p$) via a certain communication channel, and the largest eigenvalue ($\lambda_1$) of the network.
     \item Standardized average contribution $(\frac{\overline{\mathbf{y}}}{\sigma (\mathbf{y})})$  is computed as the average of the contributions normalized by the standard deviation of the contribution. The sign of $\frac{\overline{\mathbf{y}}}{\sigma (\mathbf{y})}$ indicates whether the average contribution is positive or not. Moreover, the larger the magnitude of $\frac{\overline{\mathbf{y}}}{\sigma (\mathbf{y})}$, the more homogeneous the contributions are. 
    \item Primary contribution $(\mathbf{U}_1^T \mathbf{y})$ measures the joint distribution of the structural importance and nodal contributions. It captures whether people who dominate important positions have positive contributions or not. 
\end{itemize}

\paragraph{Properties of contextual centrality when $p \lambda_1 > 1$ and $T$ is large.} Let us first provide the approximation of contextual centrality in this condition, which reveals one of the  prominent advantages of contextual centrality. 
By the Perron-Frobenius Theorem, we have $|\lambda_j|\le \lambda_1$ for every $j$. Moreover, if we assume that the graph is non-periodic, then in fact $|\lambda_j|<\lambda_1$ for all $j \ne 1$. Note that the typical random graph is not periodic, so this assumption is reasonable. Thus, when $p \lambda_1 > 1$, the term $(p\lambda_1)^t$ grows exponentially faster than $(p\lambda_j)^t$ for $j \ne 1$ so that the $j=1$ term dominates for sufficiently large values of $T$, and we obtain the approximation for contextual centrality (CC$_\text{approx}$):
\begin{equation}
\begin{split}
\text{CC}  =\sum_{j=1}^n \sum_{t=0}^T (p\lambda_j)^t \mathbf{U}_j\mathbf{U}_j^T\mathbf{y} \approx \text{CC}_\text{approx} = \Big( \sum_{t=0}^T (p\lambda_1)^t \mathbf{U}_1^T \mathbf{y} \Big) \mathbf{U}_1.
\label{eq:prox_cc}
\end{split}
\end{equation}

This approximation reveals some desirable properties of contextual centrality. Crucially, CC$_\text{approx}$ is simply a scalar multiple of the leading eigenvector when $p \lambda_1 > 1$ and $T$ is large. Therefore, the sign of $\mathbf{U}_1^T \mathbf{y}$ determines the direction of the relationship between CC$_\text{approx}$ and eigenvector centrality. By Perron-Frobenius Theorem, all elements in this leading eigenvector are nonnegative. Thus, the approximated cascade payoff, Eq.~(\ref{obj:cascade_payoff_2}), for seeding any individual is nonpositive if $\mathbf{U}_1^T \mathbf{y} < 0$, $p \lambda > 1$ and $T$ is large. This shows that in this condition the approximated cascade payoff is nonpositive for seeding any individual, so the campaigner should select a diffusion channel with lower diffusion rate to take advantage of the local neighborhood with positive contributions. Eq.~(\ref{eq:prox_cc}) naturally suggests the following relationships between $\text{CC}_\text{approx}$ and eigenvector centrality. 
    \begin{itemize}[topsep=0pt,itemsep=-1ex,partopsep=1ex,parsep=1ex]
        \item If $\mathbf{U}_1^T \mathbf{y} > 0$, CC$_\text{approx}$ and eigenvector centrality produce the same rankings. 
        \item If $\mathbf{U}_1^T \mathbf{y} < 0$, CC$_\text{approx}$ and eigenvector centrality produce the opposite rankings. 
    \end{itemize}        
The approximation does not hold when $\mathbf{U}_1^T \mathbf{y} = 0$, which is also unlikely to happen in practice. Hence, we disregard this case. Similarly, we relate contextual centrality to diffusion centrality ($\text{C}_\text{Diffusion}$) and Katz centrality ($\text{C}_\text{Katz}$), 
\begin{equation}
\begin{split}
& \text{C}_{\text{Diffusion}} \propto {\sum_{t=1}^{\infty}( p \lambda_1)^{t}}\mathbf{U}_{1}{(\mathbf{U}_{1}^{T}\mathbf{1})} = \frac{\sum_{t=1}^{\infty}( p \lambda_1)^{t}(\mathbf{U}_{1}^{T}\mathbf{1}) }{\sum_{t=0}^T (p\lambda_1)^t \mathbf{U}_1^T \mathbf{y}} \text{CC}_\text{approx},  \\
& \text{C}_{\text{Katz}} \propto {\sum_{t=0}^{\infty}( \alpha \lambda_1)^{t}}\mathbf{U}_{1}{(\mathbf{U}_{1}^{T}\mathbf{1})}= \frac{\sum_{t=0}^{\infty}( \alpha \lambda_1)^{t}(\mathbf{U}_{1}^{T}\mathbf{1}) }{\sum_{t=0}^T (p\lambda_1)^t \mathbf{U}_1^T \mathbf{y}} \text{CC}_\text{approx}, 
\end{split}
\label{eq:prox_cc_2}
\end{equation}
where $\alpha$ is the decay factor in Katz centrality. Similar to Eq.~(\ref{eq:prox_cc}), all terms on the right-hand-side of Eq.~(\ref{eq:prox_cc_2}) are positive except for $\mathbf{U}_1^T \mathbf{y}$, which similarly determines the direction of the relationship.

\subsection*{Results}


\paragraph{Predictive power of contextual centrality} 


We study two real-world empirical settings, adopting microfinance in 43 Indian villages\footnote{The data is made public by Banerjee et al. \cite{banerjee2013diffusion}.} and adopting weather insurance in 47 Chinese villages\footnote{The data is made public by Cai et al. \cite{cai2015social}.}. In each setting, there is a set of first-informed households in each village who went on to spread the information. We evaluate the adoption outcome of all other households in the village which are not first-informed. We use the adoption likelihood for the contribution vector $\mathbf{y}$ in computing contextual centrality, which is predicted using a model based on the adoption decisions of the first-informed households. Similar to Banerjee et al. \cite{banerjee2013diffusion}, we evaluate the $R^2$ of a linear regression model, where the independent variable is the average centrality of first-informed households and the dependent variable is the fraction of non-first-informed households in a village which adopted - controlled by the village size. In Fig.~\ref{fig:r2}, we show how the $R^2$ for various measures of centrality varies with $p \lambda_1$, in which the choice of $p$ influences the two centrality measures that account for the diffusion process -  diffusion centrality and contextual centrality. We see that the contextual centrality outperforms all other standard centrality measures, which indicates that marketing campaigners or social planners will benefit from using contextual centrality as the seeding strategy to maximize the participation. This result also highlights that utilizing ex-ante information about customers' likelihood of adoption helps to design better targeting strategies. Similar results without control variables and with more control variables are presented in the Supporting Information as a robustness check.

\begin{figure*}
\centering
\subfloat[Adoption of microfinance in Indian villages]{\includegraphics[width=.5\linewidth]{./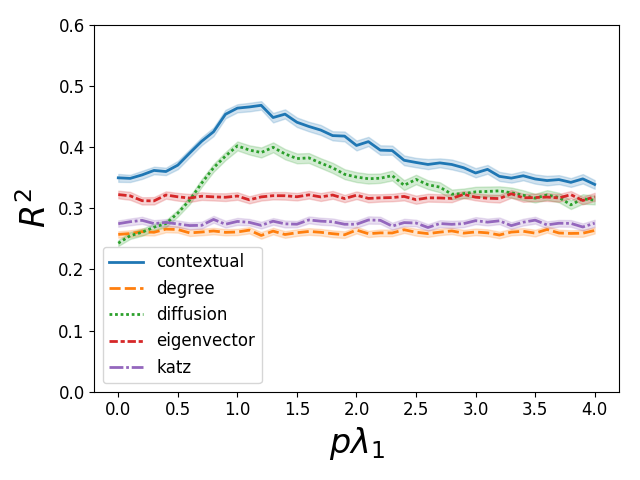}\label{fig:r2_mf}}
\subfloat[Adoption of weather insurance in Chinese villages]{\includegraphics[width=.5\linewidth]{./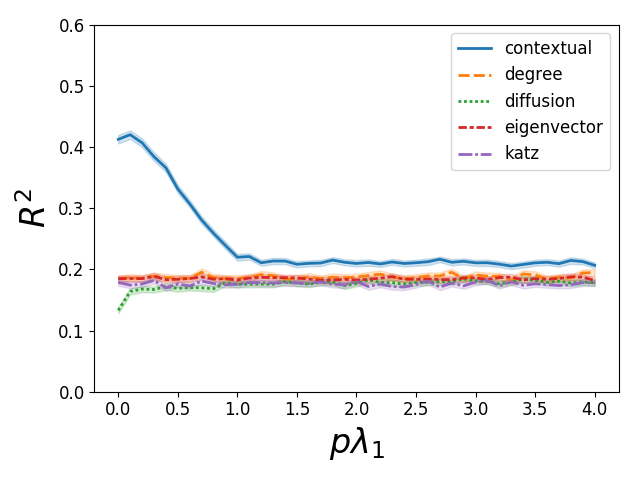}\label{fig:r2_insr}}
\caption{Predictive power of contextual centrality. We show how the average centrality of first-informed individuals predicts the eventual adoption rate of non-first-informed individuals in (a) microfinance and (b) weather insurance. The y-axis shows the 95\% confidence interval of $R^2$ computed from 1000 bootstrap samples from ordinary least squares regressions controlling for village size. The x-axis shows varying values for $p \lambda_1$, which influences only diffusion centrality and contextual centrality.}
\label{fig:r2}
\end{figure*}

\paragraph{Performance of contextual centrality relative to other centralities on random networks}

To better understand CC's performance with respect to different parameters ($p \lambda_1$, $\frac{\overline{\mathbf{y}}}{\sigma (\mathbf{y})}$), we next perform simulations on randomly generated synthetic networks and contribution vectors ($\mathbf{y}$). To compare the performance of contextual centrality against the other centralities, we use `relative change'  (calculated as $\frac{a - b}{max(|a|, |b|)}$, where $a$ is a given centrality's average payoff and $b$ is the maximum average payoff of the other centralities)\footnote{We chose `relative change' for comparison since it gives a sense of when the payoffs are different from the optimal centrality while keeping the magnitudes of the payoffs in perspective. This measure has some desirable properties. First, its value is necessarily between -2 and 2, so our scale for comparison is consistent across scenarios. Second, its magnitude does not exceed 1 unless $a$ and $b$ differ in sign, so we can tell if a centrality gets a positive average payoff while the rest do not.}. Fig.~\ref{fig:perform_cc_simulated} displays the relative change between CC's average payoff and the maximum average payoff of the other centralities aggregated over 100 runs of simulations for varying values of $\frac{\overline{\mathbf{y}}}{\sigma(\mathbf{y})}$ and $p \lambda_1$ on three different types of simulated graphs. We can see that CC performs well when $\overline{\mathbf{y}}   < 0$, $p \lambda_1 < 1$, and $\frac{\overline{\mathbf{y}}}{\sigma(\mathbf{y})}$ is small in magnitude. We will now discuss each of these cases in more detail.

When $\overline{\mathbf{y}} < 0$, maximizing the reach of the cascade is not ideal because that will result in a cascade payoff which more closely reflects $\overline{\mathbf{y}}$. CC differs from the other centralities in that it does not try to maximize the reach of the cascade. Note the dark blue diagonal band present in all plots in Fig.~\ref{fig:perform_cc_simulated}. Since the magnitude of the relative change exceeds 1 only when the values being compared have opposite signs, this region shows that there are many settings where the standard average contribution is negative, yet CC achieves a positive payoff while the other centralities do not. 

When $p \lambda_1$ is small, it is essential to seed an individual whose local neighborhood has higher individual contributions since there is not much risk of diffusing to individuals with lower individual contributions\footnote{As an extreme case, consider $p \lambda_1 = 0$. In this case, the diffusion rate is 0, so seeding an individual with a high individual payoff makes much more sense than seeding an individual with high topological importance.}. This highlights CC's advantage in discriminating the local neighborhoods with positive payoffs from those with negative payoffs while the other centralities cannot. 

When $\frac{\overline{\mathbf{y}}}{\sigma(\mathbf{y})}$ is small in magnitude, CC takes advantage of the greater relative variations between contributions. As $\frac{\overline{\mathbf{y}}}{\sigma(\mathbf{y})} \rightarrow +\infty$, Eq.~(\ref{eq:std_mean_tradeoff}) tells us that CC will seed similar to DC, which explains why CC loses some of its advantage. However, as $\frac{\overline{\mathbf{y}}}{\sigma(\mathbf{y})} \rightarrow -\infty$, Eq.~(\ref{eq:std_mean_tradeoff}) tells us that CC will seed opposite to DC, which explains why CC maintains an advantage.

We briefly comment on the regions where CC does not seem to offer an advantage. CC appears to do slightly worse when $\overline{\mathbf{y}} > 0$ and $p \lambda_1$ is a bit greater than 1. We expect that when $p \lambda_1 > 1$, CC would offer less of an advantage since the cascade reaches most individuals in the network. However,  Eq.~(\ref{eq:prox_cc}) (and to some extent Eq.~(\ref{eq:std_mean_tradeoff})) suggests that CC should seed similar to the other centralities. Note that in Fig.~\ref{fig:relative_change_synthetic}, which averages over the samples of all three graph types, CC performs better. Thus, we conclude that this is due to the high variance of the cascade payoffs in this region. CC also seems to perform poorly when $\overline{\mathbf{y}} = 0$. This is because as $p \lambda_1$ increases, the payoff more closely reflects $\overline{\mathbf{y}}$, which means that CC's average payoff and the maximum average payoff of the other centralities are very close to 0 and often have different signs, so the relative change would appear to be large. In these two regions where CC does not seem to offer an advantage, no single centrality dominates the rest, which shows that there are large relative variations in these regions. Similar figures to Fig.~\ref{fig:perform_cc_simulated} for all other centrality measures used in this paper can be found in the Supporting Information. 

\begin{figure*} 
\centering
\subfloat[Barabasi-Albert model]{\includegraphics[width=.4\linewidth]{./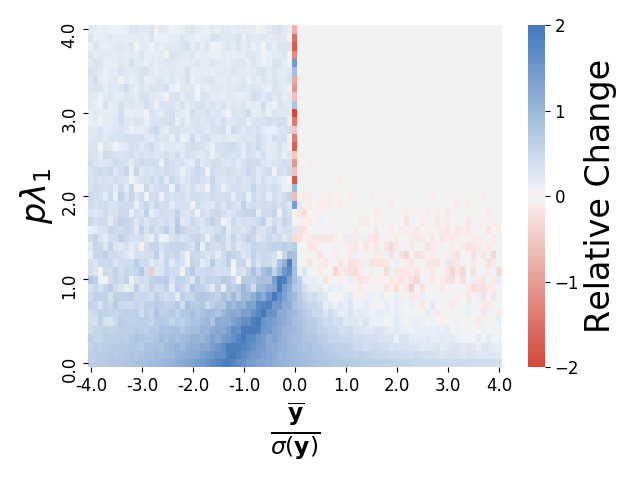} \label{fig:relative_change_barabasi_albert}}
\subfloat[Erdos-Renyi model]{\includegraphics[width=.4\linewidth]{./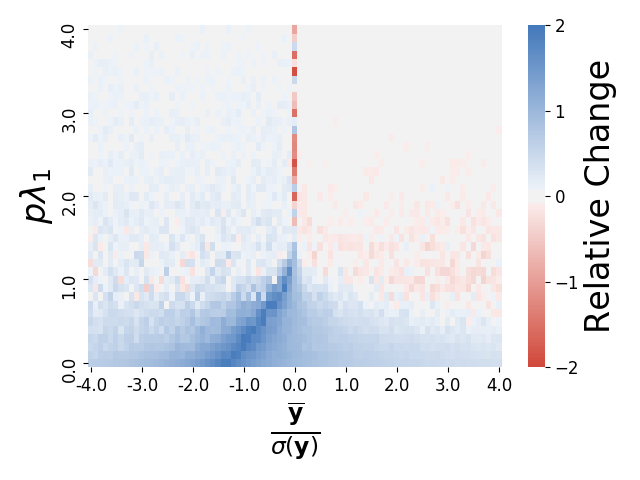} \label{fig:relative_change_erdos_renyi}} \\
\subfloat[Watts-Strogatz model] {\includegraphics[width=.4\linewidth]{./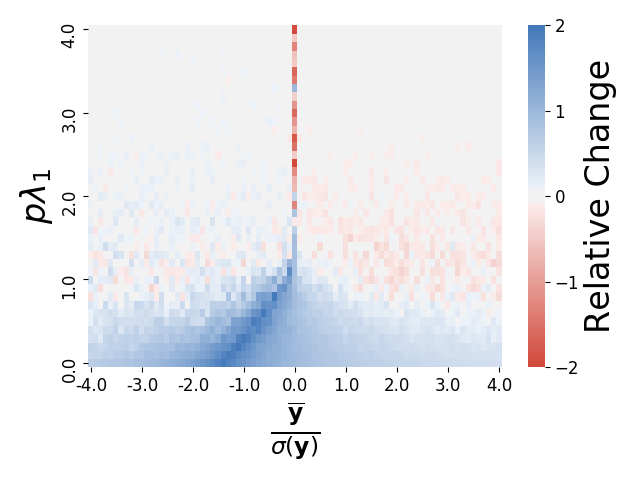} \label{fig:relative_change_watts_strogatz}}
\subfloat[All models]{\includegraphics[width=.4\linewidth]{./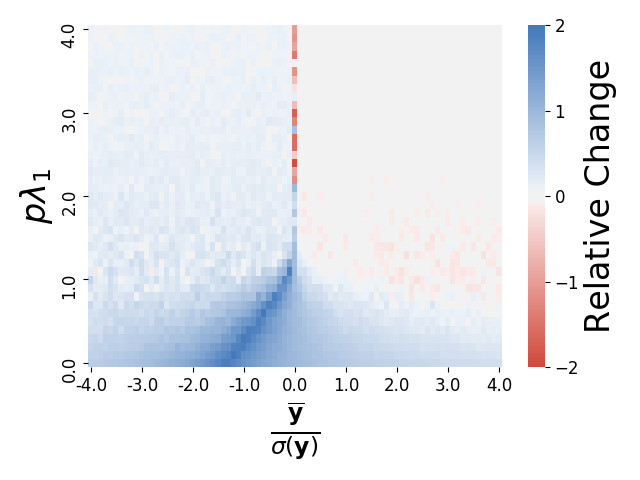} \label{fig:relative_change_synthetic}}
\caption{Performance of contextual centrality relative to other centralities on random networks. Each plot shows the relative change, computed as $\frac{a - b}{max(|a|, |b|)}$ where $a$ is CC's average payoff and $b$ is the maximum average payoff of the other centralities, for varying values of $\frac{\overline{\mathbf{y}}}{\sigma(\mathbf{y})}$ and $p \lambda_1$. The plots correspond to the results on random networks generated according to the Barabasi-Albert, Erdos-Renyi,  Watts-Strogatz models, and all models. 
}
\label{fig:perform_cc_simulated}
\end{figure*}

\paragraph{Performance of contextual centrality relative to other centralities on real-world networks}

Next, we analyze the performance of contextual centrality in achieving the cascade payoff, as defined in Eq.~(\ref{obj:cascade_payoff}), using simulations on three real-world settings, namely adoption of microfinance, adoption of the weather insurance, and political voting campaign. To compare the performance of contextual centrality against the maximum of centrality measures for each condition, we use `relative change' as before. 
We observe the network structure ($\mathbf{A}$) and adoption decisions in the  campaign for microfinance and weather insurance. In the campaign for political votes, we generate the network structure and the contribution vector from the empirical distributions. We vary the diffusion rate of $p$ in the independent cascade model to examine how it influences the performances of different centrality measures. We see that in (a) campaign for microfinance and (b) campaign for weather insurance, CC outperforms the other centralities when $p \lambda_1$ is small. While in (c) campaign for political votes, CC outperforms the other centralities for all $p \lambda_1$. The standardized average contributions of (a), (b), and (c) are  2.29, 5.27, and -2.22, respectively. This result is consistent with the results presented in Fig.~\ref{fig:perform_cc_simulated}. It shows that contextual centrality can greatly outperform other centrality measures when the standardized average contribution is negative for a wide range of $p \lambda_1$. When  standardized average contribution is positive, contextual centrality outperforms other centrality measures when the spreadability is small and achieves comparable result with other centralities as the spreadability further increases.

\begin{figure*}
\centering
\subfloat[Campaign for microfinance]{\includegraphics[width=.33\linewidth]{./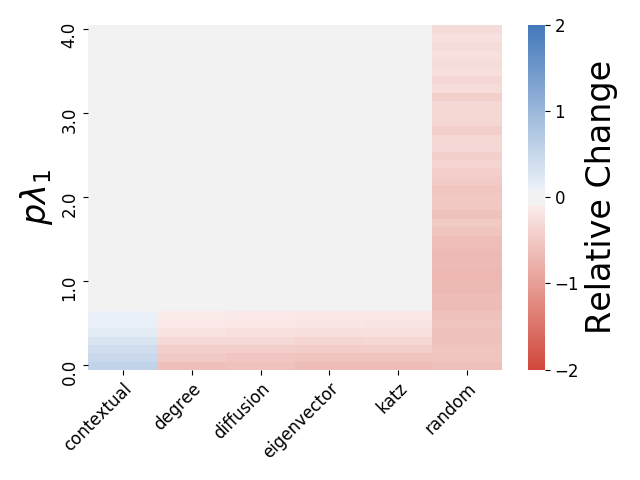} \label{fig:relative_change_microfinance}}
\subfloat[Campaign for weather insurance]{\includegraphics[width=.33\linewidth]{./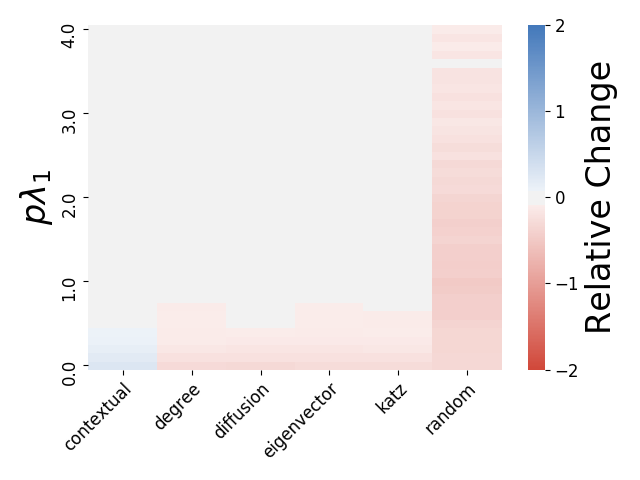} \label{fig:relative_change_insurance}}
\subfloat[Campaign for votes]{\includegraphics[width=.33\linewidth]{./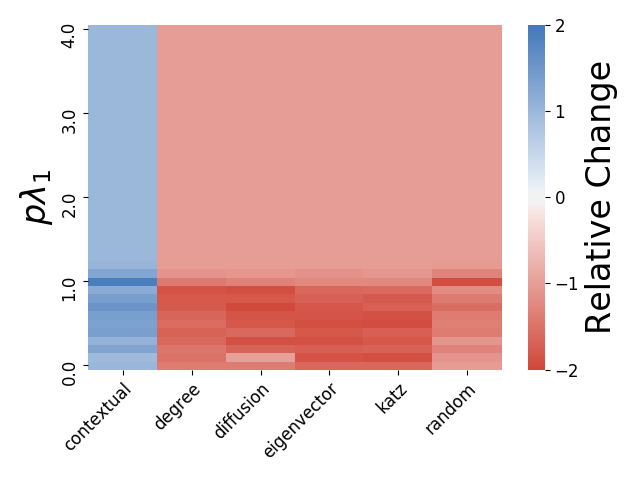} \label{fig:relative_change_political_watts_strogatz_10}}
\caption{Performance of contextual centrality relative to other centralities on real-world networks. Each plot shows the relative change for varying values of $p \lambda_1$. We compare contextual centrality with degree centrality, diffusion centrality, eigenvector centrality, Katz centrality, and random seeding. }
\label{fig:perform_cc_real}
\end{figure*}

\paragraph{Approximation of contextual centrality and the importance of primary contribution}
A negative contextual centrality score indicates that seeding with the particular node will generate a negative payoff. Therefore, we design a seeding strategy in which we seed only if the maximum of contextual centrality is nonnegative. As shown by the blue dashed and solid lines in Fig.~\ref{fig:perform_cc_large_connectivity}, the new seeding strategy, ``seed nonnegative'', performs better than always seeding the top-ranked individual. Building upon Eq.~(\ref{eq:prox_cc}), we introduce a variation of eigenvector centrality, ``eigenvector adjusted'', as the product of eigenvector centrality and the primary contribution ($\mathbf{U}_1^T \mathbf{y}$).  This variation of eigenvector centrality performs on par with contextual centrality as $p \lambda_1$ grows large as expected according to Eq.~(\ref{eq:prox_cc}). ``Eigenvector adjusted'' greatly outperforms eigenvector centrality \footnote{Another variation of eigenvector centrality is to adjust eigenvector centrality by $\overline{\mathbf{y}}$. Note that the sign of $\mathbf{U}_1^T \mathbf{y}$ does not always equal $\overline{\mathbf{y}}$. When the signs differ, seeding only when $\mathbf{U}_1^T \mathbf{y}$ is positive produces a higher cascade payoff when $p\lambda_1$ is not too large. However, as $p\lambda_1$ further increases and the diffusion saturates most of the network, the sign of $\overline{\mathbf{y}}$ predicts that of the cascade payoff. However, larger $p\lambda_1$ is not as interesting as smaller ones, which happens more frequently in real life. We present average cascade payoff comparing the two strategies when $\overline{\mathbf{y}} ( \mathbf{U}_1^T \mathbf{y}) < 0$ in the Supplementary Information.}. Comparing the strategies in Fig.~\ref{fig:perform_cc_large_connectivity}, the new strategy of accounting for the sign of the centrality measures improves the average payoffs by an order of magnitude. This pattern also highlights the importance of the primary contribution in campaign strategies. We present figures for the analogous variations of the other centralities in the Supporting Information.

\begin{figure}
\centering
{\includegraphics[width=0.5\linewidth]{./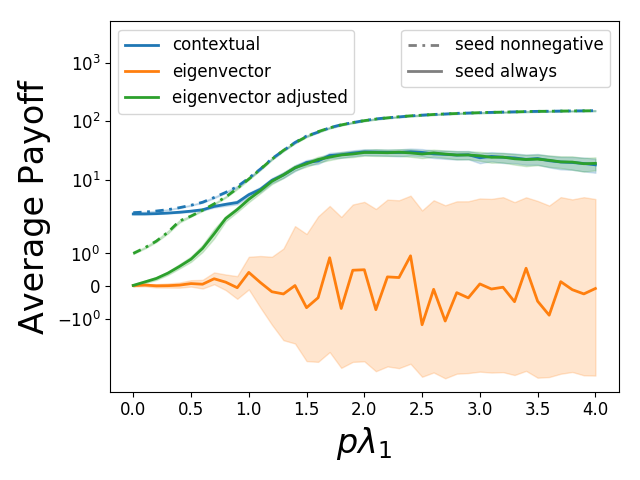} }
\caption{Average cascade payoff for variations of contextual centrality and eigenvector centrality. The x-axis is $p \lambda_1$, and the y-axis is the average payoff, with the shaded region as the 95\% confidence intervals. For `eigenvector adjusted' centrality, we multiply eigenvector centrality with the primary contribution $\mathbf{U}_1^T \mathbf{y}$. For 'seed nonnegative', we only seed if the maximum of the centrality measure is nonnegative, otherwise it is named as `seed always'. }
\label{fig:perform_cc_large_connectivity}
\end{figure}

\paragraph{Homophily and the maximum of contextual centrality} 
Homophily is a long-standing phenomenon in social networks that describes the tendency of individuals with similar characteristics to associate with one another \cite{mcpherson2001birds}. The strength of homophily is measured by the difference in the contributions of the neighbors, $\sum_{i,j}^N A_{ij} (y_i - y_j)^2$. We analyze the relationship between the strength of homophily and the approximated cascade payoff by seeding the highest-ranked node in contextual centrality. After controlling for $\frac{\overline{\mathbf{y}}}{\sigma(\mathbf{y})}$ and $p \lambda_1$, we regress the maximum of the contextual centrality on the strength of homophily of the network separately for three conditions of $\frac{\overline{\mathbf{y}}}{\sigma(\mathbf{y})}$. When the spreadability of contextual centrality is small,  stronger homophily tends to correlate with a large approximated cascade payoff across all graph types. This result shows that stronger homophily of the network predicts higher approximated cascade payoff with small spreadability. When the network is Barabasi-Albert and $\frac{\overline{\mathbf{y}}}{\sigma(\mathbf{y})} > 0$, the relationship is the strongest. As the spreadability further increases, the correlation between contextual centrality and homophily drops dramatically, and thereby we exclude the scenarios when $p \lambda_1 > 1$.

\begin{figure}
\centering
\includegraphics[width=0.5\linewidth]{./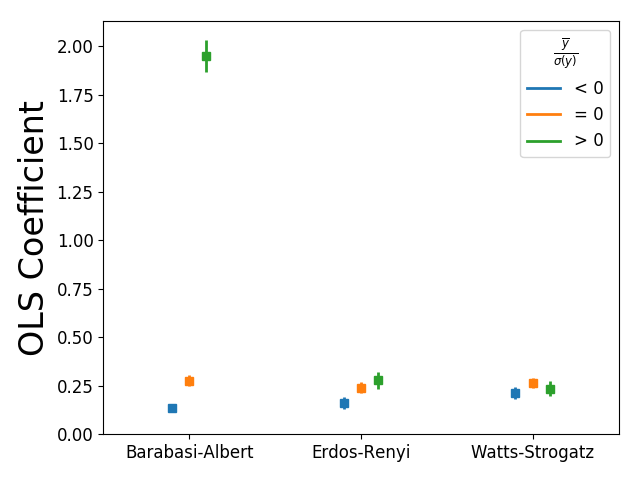}
\caption{Homophily and maximum of contextual centrality when $p\lambda_1 < 1$. We regress the maximum of contextual centrality on homophily after controlling for $\frac{\overline{\mathbf{y}}}{\sigma(\mathbf{y})}$ and $p \lambda_1$. The y-axis is the OLS coefficients of homophily (with the vertical line as the 95\% confidence interval) and the x-axis corresponds to three types of networks. We perform the analysis separately for $\frac{\overline{\mathbf{y}}}{\sigma(\mathbf{y})}$ being larger than, smaller than and equals to zero.} 
\label{fig:homo_context}
\end{figure}

\subsection*{Discussion}

Contextual centrality sheds light on the understanding of node importance in networks by emphasizing node characteristics relevant to the objective of the diffusion other than the structural topology, which is vital for a wide range of applications such as marketing or political campaigns on social networks. Notably, nodal contributions to the objective, the diffusion probability, and network topology jointly produce an effective campaign strategy. It should be obvious up to now with the thorough simulations in this study that exposing a large portion of the population in the diffusion is not always desirable. When the spreadability is small, contextual centrality effectively ranks the nodes whose local neighborhoods generate larger cascade payoffs the highest. When the spreadibility is large, the primary contribution tends to predict the sign of the approximated cascade payoff. This suggests that when the primary contribution is negative the campaigner should reduce the spreadability of the campaign to take advantage of the individuals whose local neighborhoods generate positive approximated cascade payoff in aggregation. Resorting to campaign channels with lower diffusion probability and less viral features, such as direct mail, can reduce spreadability. Moreover, as the standardized average contribution increases, the contribution vector becomes comparatively more homogeneous and comparatively less important than the network structure. Therefore, when the average contribution is positive, seeding with contextual centrality becomes similar to seeding with diffusion centrality.

Contextual centrality emphasizes the importance of incorporating node characteristics that are exogenous to the network structure and the dynamic process. More broadly, contextual centrality provides a generic framework for future studies to analyze the joint effect of network structure, nodal characteristics, and the dynamic process. Other than applications on social networks, contextual centrality can be applied to analyzing a wide range of networks, such as the biology networks (e.g., rank the importance of genes by using the size of their evolutionary family as the contribution vector\cite{martinez2003identifying}), the financial networks (e.g., rank the role of institutions in risk propagation in financial crisis with their likelihoods of failure as the contribution vector\cite{acemoglu2015systemic}), and the transportation networks (e.g., rank the importance of airports with the passengers flown per year as the contribution vector\cite{guimera2005worldwide} ).

\subsection*{Methods}

In this study, we compare contextual centrality with diffusion centrality and other widely-adopted reachability-based centrality measures - degree, eigenvector, and Katz centrality. We compute degree centrality by taking the degree of each node, normalized by $N - 1$.
We compute eigenvector centrality by taking the leading eigenvector $\mathbf{U}_1$ with unit length and nonnegative entries. We compute Katz centrality as $\sum_{t=0}^\infty (\alpha \mathbf{A})^t \mathbf{1}$, setting $\alpha$, which should be strictly less than $\lambda_1^{-1}$,  to $0.9 \cdot \lambda_1^{-1}$. We compute diffusion centrality as $\sum_{t=1}^T (p \mathbf{A})^t \mathbf{1}$. 
For both diffusion and contextual centrality, we set $T = 16$, except for the microfinance in Indian villages setting, where we set $T$ as done by Banerjee et al. \cite{banerjee2013diffusion}. 

Simulations of the diffusion process in each setting follow the independent cascade model \cite{kempe2003maximizing}. For each centrality, the highest ranked node is set to be the initial seed. We compute cascade payoff by summing up the individual contributions of all the nodes reached in the cascade. For each parameter tested in different settings, we run 100 simulations.

In the empirical analysis of microfinance in Indian villages and weather insurance in Chinese villages, we build models to predict the adoption likelihood to use as $\mathbf{y}$ in computing contextual centrality. For each setting, we use the data provided in Banerjee et al. \cite{banerjee2013diffusion} and Cai et al. \cite{cai2015social}, respectively, as inputs to a feed-forward neural network trained to predict the adoption likelihood based on the adoption decisions of first-informed individuals. For the microfinance in Indian villages, the covariates include village size, quality of access to electricity, quality of latrines, number of beds, number of rooms, the number of beds per capita, and the number of rooms per capita. For the weather insurance in Chinese villages setting, 39 of the provided characteristics are selected as inputs by choosing those for which all households had data after removing households with many missing characteristics.

For the political campaign experiment in Turkey, we use individual home and work locations to build a network and regional voting data to sample voting outcomes to use as $\mathbf{y}$. Individuals belonging to the same home neighborhood are connected according to the Watts-Strogatz model with a maximum of 10 neighbors. Same for the work neighborhoods. These two networks are superimposed to form the final network. Since we do not know the political voting preferences on an individual level, individual voting outcomes are sampled to match voting data on a regional level. Specifically, we let the actual fraction of the population that voted for the AK party in an individual's home neighborhood be the probability that individual votes for the AK party. We let $y_i = +1$ represent a vote for AK party and $y_i = -1$ represent a vote for any other party. We sample a new set of voting outcomes from the regional voting distributions for each diffusion simulation.

For the synthetic setting, we generate a new random graph for each simulation according to Barabasi-Albert, Erdos-Renyi, and Watts-Strogatz models. The size $n$ of each graph varies between 30 and 300. For Barabasi-Albert, $m$ varied between 1 and $n$. For Erdos-Renyi, $p$ varies between 0 and 1. For Watts-Strogatz, $k$ varies between $\ln{n}$ and $n$, and $p$ varies between 0 and 1. Individual contributions $\mathbf{y}$ are sampled from a normal distribution with unit standard deviation. Note that the scale of $\mathbf{y}$ does not change the rankings of contextual centrality.

\bibliographystyle{naturemag}
\bibliography{pnas-sample}

\clearpage
\section*{Supporting Information}
\tableofcontents   
\clearpage

\section{Properties of contextual centrality}

\subsection{The relationship between contextual centrality and other centrality measures}

Degree, eigenvector, Katz, diffusion, and contextual centrality can all be expressed as specific cases of a simple recurrence relation with an intuitive explanation. Roughly speaking, a node's importance in a network can be broken down into two parts: its influence on other nodes in the network through its neighbors, and its individual contribution to the cascade payoff. 

Let $\mathbf{c}_t$ be the importance (i.e., centrality) of all nodes in the network at time step $t$. One way to capture the notion of each node's influence on other nodes in the network is through $\mathbf{A} \mathbf{c}_{t-1}$, where $\mathbf{A}$ is the adjacency matrix of the network. This term effectively sums up the importance of the neighbors of each node. With this in mind, we can express $\mathbf{c}_t$ as, 
\begin{equation}
\mathbf{c}_t = \alpha \mathbf{A} \mathbf{c}_{t-1} + \boldsymbol{\beta},
\end{equation}
where $\alpha$ is a constant and $\boldsymbol{\beta}$ is the individual contribution of each node in the network. It is, of course, possible to parameterize $\alpha$, $\boldsymbol{\beta}$, or $\mathbf{A}$ by $t$ as well, but for simplicity let us assume they remain constant. Expanding this recurrence, we get

\begin{equation}
\mathbf{c}_t = (\alpha \mathbf{A})^t \mathbf{c}_0 + \sum_{i=0}^{t-1} (\alpha \mathbf{A})^i \boldsymbol{\beta}.
\end{equation}

Now if we substitute $\alpha = p$, $\boldsymbol{\beta} = \mathbf{y}$, and $\mathbf{c}_0 = \mathbf{y}$, then $\mathbf{c}_T$ is exactly equal to CC. Substitutions can be done for all the centralities discussed above and are summarized in Table \ref{table:substitutions}.

\begin{table}[h]
\caption{Centralities defined by $\mathbf{c}_t = \alpha \mathbf{A} \mathbf{c}_{t-1} + \boldsymbol{\beta}.$}
\begin{center}
\begin{tabular}{ |c|c|c|c|c| } 
 \hline
Centrality & $\alpha$ & $\boldsymbol{\beta}$ & $\mathbf{c}_0$ & $t$ \\
\hline \hline
Degree & 1 & $\mathbf{0}$ & $\mathbf{1}$ & 1 \\
\hline
Eigenvector & 1 & $\mathbf{0}$ & $\mathbf{1}$ & $\infty$ \\
\hline
Katz & $< \frac{1}{\lambda_1}$ & $\mathbf{1}$ & $\mathbf{1}$ & $\infty$ \\
\hline
Diffusion & $p$ & $\mathbf{1}$ & $\mathbf{1}$ & $T$ \\
\hline
Contextual & $p$ & $\mathbf{y}$ & $\mathbf{y}$ & $T$ \\
\hline
\end{tabular}
\end{center}
\label{table:substitutions}
\end{table}

Contextual centrality is developed upon and generalizes diffusion centrality, but there are two important differences.  First, all nodes passed through by the random walk contribute positively and homogeneously in diffusion centrality, while the main advantage of contextual centrality is allowing for the heterogeneous contributions. Second, the random walk of contextual centrality starts from the chosen seed, while that of diffusion centrality starts from the neighbors of the chosen seed. Under the condition that $\overline{\mathbf{y}}$ is positive and constant for all entries, contextual centrality inherits the nice nesting properties of diffusion centrality, which encompasses and spans the gap between degree centrality, eigenvector centrality, and Katz centrality. In particular, CC is proportional to degree centrality when $T = 1$, proportional to eigenvector centrality as $T \rightarrow \infty$ when $p\geq \lambda_1^{-1}$, and proportional to Katz centrality when $T = \infty$ and $p < \lambda_1^{-1}$. Proof can be found in Banerjee et al. \cite{banerjee2017using}.

Contextual centrality is also similar to Katz centrality, but we highlight two crucial differences. First, contextual centrality is more general in that $p$ can be larger than $\lambda_1^{-1}$ and provides essential insights into this region. Second, we allow $T$ to vary according to the specific setting while in Katz centrality the diffusion period $T$ is infinite. $T$ carries important implications. For the product that is effective in a short period, such as a coupon that will expire within a day, $T$ is relatively small compared with the diffusion of a new phone which will be on the market for much longer.

\subsection{Relationship between approximated cascade payoff and cascade payoff}

Contextual centrality aims to maximize objective~(\ref{obj:cascade_payoff_2}), which provides an approximation to cascade payoff, as in objective~(\ref{obj:cascade_payoff}), by an independent cascade model. In Fig.~\ref{fig:cascade_payoff_cc}, we analyze the Spearman's and Pearson correlation between the two concerning different spreadability. Both correlation measures decrease as spreadability increases from 0 to 1 and increase afterward.  On the bulk part, the Spearman's correlation between the two is higher than Pearson correlation and is around 0.9 or higher. Note that $p\lambda_1 = 1$ is the phase transition in network contagion with the Susceptible-Infected (SI) model and is known as the epidemics threshold\cite{chakrabarti2008epidemic}. This may explain why we see a different behavior close to $p \lambda_1 = 1$.

\begin{figure}
\centering
\includegraphics[width=.5\linewidth]{./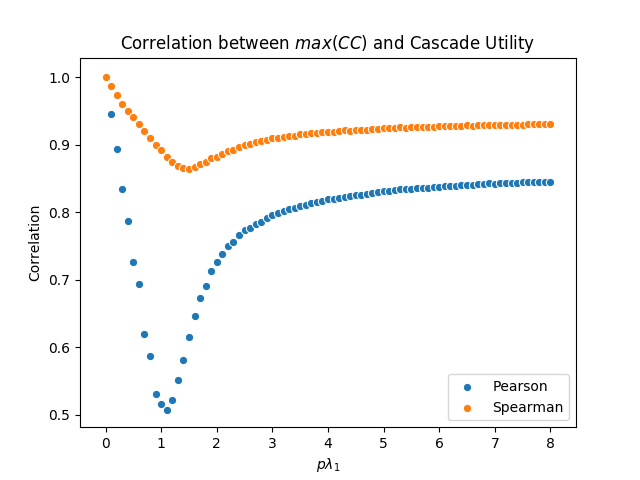}
\caption{Relationship between approximated cascade payoff and cascade payoff. The y-axis and x-axis display the correlation and the spreadability ( $p\lambda_1$) respectively. Pearson and Spearman's correlation are shown in blue and orange color respectively.} 
\label{fig:cascade_payoff_cc}
\end{figure}

\subsection{Game-theoretic interpretation of contextual centrality with local interactions} 
Ballester et al.  is the first to provide a behavioral foundation for centrality, in particular, Katz-Bonacich centrality, using a complementary linear-quadratic-form network game \cite{ballester2006s}. They found that one's network position can fully explain the Nash equilibrium in such network games. Similarly, we show that when the spreadability is smaller than one, and agents can interact for an infinite time period, their action levels can be explained by both their structural positions, as well as their marginal benefits of actions (which corresponds to nodes' contributions). 

In the setup of Ballester et al., agents choose actions optimally in response to their neighbors \cite{ballester2006s}. The quadratic functional form implies that the utility of individual $i$, ($u_i$), is quadratic in $i$'s action level ($a_i$), are dependent on $i$'s
 neighbors’ effort, and has a homogeneous marginal benefit $\alpha$ across the population, $u_i = \alpha a_i - \frac{1}{2} a_i^2 + \beta \sum_{j=1}^n A_{ij}a_i a_j$, where $\alpha$ is a scalar and $\alpha > 0$.  Taking the first-order condition, it is easy to prove that the strategy in Nash equilibrium is $\mathbf{a} = \big( \mathbf{I} - \beta \mathbf{A} \big)^{-1} \mathbf{\alpha} \propto c_\text{Katz}$, which is proportional to the Katz centrality.

In the previous setup, Ballester et al. assume that the marginal benefit is homogeneous and positive. We relax this constraint, allowing it to vary across individuals ($y_i$) with and can take on negative values. With this, suppose agent $i$ chooses an action ($a_i$) according to the following utility function, 
\begin{equation}
    u_i = a_i y_i - \frac{1}{2} a_i^2 + \beta \sum_{j=1}^n A_{ij}a_i a_j. 
\end{equation}
With this variant, the equilibrium strategy becomes, 
\begin{equation}
    \mathbf{a} = \big( \mathbf{I} - \beta \mathbf{A} \big)^{-1} \mathbf{y}. 
 \label{eq:equil_cc}
\end{equation}
Eq.~(\ref{eq:equil_cc}) has the exact same form as CC when $T \rightarrow \infty$, $ \beta \lambda_1 < 1$ and $\beta = p$. Hence, we see that contextual centrality approximates agents' equilibrium actions with heterogeneous marginal utilities in this condition. 

\subsection{Bounds and distribution of contextual centrality in terms of spreadability}

In this section, we present the upper bound for maximum possible contextual centrality. When $p\lambda_1$ is larger than $1$,  CC approaches infinity as $T$ grows. On the other hand, when $p\lambda_1 < 1$, CC is finite for $T=\infty$, which can be understood as a lack of virality, expressed in a fizzling out of the diffusion process with time. In fact, the specific value of $p \lambda_1$ can be used to bound the maximum possible CC given the norm of the score vector $\textbf{y}$.


\begin{proposition}
\label{ccineq}
\begin{equation*}
max(\text{CC}(\mathbf{A}, p, T, \mathbf{y}))\le ||\text{CC}(\mathbf{A}, p,T, \mathbf{y})||
\end{equation*}
\begin{equation*}
 \le \frac{1-(p\lambda_1)^{T+1}}{1-p\lambda_1}||\mathbf{y}||
\end{equation*}
If, in addition, $p\lambda_1<1$, then this is further bounded by $\frac{1}{1 - p\lambda_1}||\mathbf{y}||$.
\end{proposition}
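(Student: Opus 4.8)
The plan is to chain the triangle inequality with a clean spectral bound on each power of $\mathbf{A}$ and then recognize a finite geometric series. First, the leftmost inequality $\max(\text{CC})\le\|\text{CC}\|$ is immediate and uses nothing about the structure of CC: for any vector $\mathbf{v}$, each entry obeys $v_i\le|v_i|\le\sqrt{\sum_j v_j^2}=\|\mathbf{v}\|$, so in particular the largest entry is dominated by the Euclidean norm.

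For the central inequality, I would start from the homogeneous-rate form $\text{CC}(\mathbf{A},p,T,\mathbf{y})=\sum_{t=0}^{T}(p\mathbf{A})^t\mathbf{y}$ (using $\mathbf{P}\circ\mathbf{A}=p\mathbf{A}$) and apply the triangle inequality to get $\|\text{CC}\|\le\sum_{t=0}^{T}p^t\|\mathbf{A}^t\mathbf{y}\|$. The key step is estimating $\|\mathbf{A}^t\mathbf{y}\|$. Using the eigendecomposition $\mathbf{A}=\mathbf{U}\boldsymbol{\Lambda}\mathbf{U}^T$ already introduced, I would expand $\mathbf{y}=\sum_j(\mathbf{U}_j^T\mathbf{y})\mathbf{U}_j$ in the orthonormal eigenbasis, so that $\mathbf{A}^t\mathbf{y}=\sum_j\lambda_j^t(\mathbf{U}_j^T\mathbf{y})\mathbf{U}_j$. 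By orthonormality, $\|\mathbf{A}^t\mathbf{y}\|^2=\sum_j\lambda_j^{2t}(\mathbf{U}_j^T\mathbf{y})^2\le\lambda_1^{2t}\sum_j(\mathbf{U}_j^T\mathbf{y})^2=\lambda_1^{2t}\|\mathbf{y}\|^2$, where the inequality invokes $|\lambda_j|\le\lambda_1$ for all $j$ from the Perron--Frobenius theorem as noted in the excerpt. Hence $\|\mathbf{A}^t\mathbf{y}\|\le\lambda_1^t\|\mathbf{y}\|$.

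Substituting this bound gives $\|\text{CC}\|\le\|\mathbf{y}\|\sum_{t=0}^{T}(p\lambda_1)^t$, and summing the finite geometric series yields $\sum_{t=0}^{T}(p\lambda_1)^t=\frac{1-(p\lambda_1)^{T+1}}{1-p\lambda_1}$ (for $p\lambda_1\ne1$; the borderline $p\lambda_1=1$ gives the value $T+1$, recovered as the limit), which is precisely the claimed bound. For the final refinement, I would note that $p>0$ and $\lambda_1>0$, so $p\lambda_1<1$ forces $0<(p\lambda_1)^{T+1}$, whence $1-(p\lambda_1)^{T+1}<1$ and the prefactor is strictly less than $\frac{1}{1-p\lambda_1}$, giving the stated simplification.

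I anticipate no genuine obstacle, as this is a standard norm estimate; the single point requiring care is the bound $\|\mathbf{A}^t\mathbf{y}\|\le\lambda_1^t\|\mathbf{y}\|$, for which the symmetry of $\mathbf{A}$ is essential. Symmetry supplies the orthonormal eigenbasis that makes the spectral radius $\lambda_1$ coincide with the operator norm; for a non-symmetric matrix the spectral radius can be strictly smaller than the operator norm and this clean bound would fail.
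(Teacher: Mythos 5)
Your proof is correct and takes essentially the same route as the paper's: apply the triangle inequality to $\sum_{t=0}^{T}(p\mathbf{A})^{t}\mathbf{y}$, bound each term by $(p\lambda_1)^{t}\|\mathbf{y}\|$ using the symmetry of $\mathbf{A}$, and sum the finite geometric series. The only difference is cosmetic: where the paper cites the fact that the operator norm of a symmetric matrix coincides with its spectral radius, you re-derive it inline via the orthonormal eigenbasis expansion and Parseval's identity.
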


\begin{proof}
The first inequality, $max(CC(\mathbf{A}, p, T, \mathbf{y}))\le ||CC(\mathbf{A}, p, T, \mathbf{y})||$, is clear.

Next we use the matrix norm $||\mathbf{A}||:=sup\{||\mathbf{A}x||/||x||:x\ne 0\}$, which by definition satisfies $||\mathbf{A}x||\le ||\mathbf{A}||\cdot ||x||$ for all $x$, and which coincides with spectral radius $\rho(\mathbf{A})$ for symmetric matrices. Since, for us, $\rho(\mathbf{A})=\lambda_1$, we have
\begin{equation*}
||CC(\mathbf{A}, p, T, \mathbf{y})||=||(\sum_{t=0}^T(p\mathbf{A})^t)\mathbf{y}||\le ||(\sum_{t=0}^T(p\mathbf{A})^t)||\cdot||\mathbf{y}||
\end{equation*}
\begin{equation*}
\le \sum_{t=0}^T||(p\mathbf{A})^t||\cdot||\mathbf{y}||=\sum_{t=0}^T(p\lambda_1)^t\cdot||\mathbf{y}||\le \frac{1-(p\lambda_1)^{T+1}}{1-p\lambda_1}||\mathbf{y}||
\end{equation*}
which, if $p\lambda_1<1$, can be further bounded by $ \frac{1}{1 - p\lambda_1}||\mathbf{y}||$
\end{proof}

While the above result bounds contextual centrality from above, the actual value of CC is highly variable, depending on the structure of the graph and the distribution of the score vector among its nodes. For a discussion of expected CC among random networks, see the Erdos-Reyni section below. Next, we discuss the behavior of contextual centrality when $\mathbf{y}$ is variable.

\subsection{Robustness of contextual centrality in response to perturbations in $\mathbf{y}$}

As discussed in the main body of the paper, in the real-world data, node characteristics can be noisy, stochastic and biased. Therefore, it is essential to analyze the robustness of contextual centrality in response to small perturbations in $\mathbf{y}$. We first perform a sensitivity analysis, studying bounds on the error in contextual centrality in terms of noise in $\mathbf{y}$, and then study contextual centrality as a random variable assuming a multivariate normal model of $\mathbf{y}$.

\paragraph{Sensitivity Analysis}
We let the observed (or estimated) score vector be $\hat{\mathbf{y}}$ and let $\mathbf{y}$ be the true score vector. The errors in the score vector are given by the vector $\Delta \mathbf{y}: = {\mathbf{y}} -  \hat{\mathbf{y}} $ and similarly $\Delta \text{CC}:=\text{CC}(\mathbf{A}, p,T,\hat{\mathbf{y}})-\text{CC}(\mathbf{A}, p,T,\mathbf{y})$ is the error between the CC computed from observed and actual data. 

We have the following bound on $||\Delta \text{CC}||$, which follows directly from Proposition \ref{ccineq} and the fact that CC is linear with respect to the score vector $\mathbf{y}$.

\begin{corollary}
\begin{equation*}
||\Delta CC||
\le \frac{1-(p\lambda_1)^{T+1}}{1-p\lambda_1}||\Delta \mathbf{y}||
\end{equation*}
If, in addition, $p\lambda_1<1$, then this is further bounded by $\frac{1}{1 - p\lambda_1}||\Delta \mathbf{y}||$.
\end{corollary}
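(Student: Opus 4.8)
The plan is to reduce the statement entirely to Proposition~\ref{ccineq} by exploiting the fact that CC is linear in its score-vector argument. First I would record that, from the definition $\text{CC}(\mathbf{A}, p, T, \mathbf{y}) = \sum_{t=0}^{T}(p\mathbf{A})^{t}\mathbf{y}$, the map $\mathbf{y} \mapsto \text{CC}(\mathbf{A}, p, T, \mathbf{y})$ is a linear map (a fixed matrix applied to $\mathbf{y}$). Using the definitions $\Delta\mathbf{y} := \mathbf{y} - \hat{\mathbf{y}}$ and $\Delta\text{CC} := \text{CC}(\mathbf{A}, p, T, \hat{\mathbf{y}}) - \text{CC}(\mathbf{A}, p, T, \mathbf{y})$, linearity gives
\begin{equation*}
\Delta\text{CC} = \text{CC}(\mathbf{A}, p, T, \hat{\mathbf{y}} - \mathbf{y}) = -\,\text{CC}(\mathbf{A}, p, T, \Delta\mathbf{y}),
\end{equation*}
so that $||\Delta\text{CC}|| = ||\text{CC}(\mathbf{A}, p, T, \Delta\mathbf{y})||$, the sign being irrelevant under the norm.

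Next I would apply Proposition~\ref{ccineq} verbatim, with the perturbation vector $\Delta\mathbf{y}$ now playing the role of the score vector. The second inequality of that proposition yields
\begin{equation*}
||\text{CC}(\mathbf{A}, p, T, \Delta\mathbf{y})|| \le \frac{1-(p\lambda_1)^{T+1}}{1-p\lambda_1}\,||\Delta\mathbf{y}||,
\end{equation*}
and combining this with the identity from the previous step gives the first claimed bound on $||\Delta\text{CC}||$. The additional bound under $p\lambda_1 < 1$ then follows from the same monotonicity observation used in Proposition~\ref{ccineq}: since $(p\lambda_1)^{T+1} \ge 0$ we have $1 - (p\lambda_1)^{T+1} \le 1$, while $1 - p\lambda_1 > 0$, so the prefactor is dominated by $\frac{1}{1 - p\lambda_1}$.

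Honestly, there is no genuine obstacle here, and I would present the result as an immediate corollary rather than a theorem with a substantive proof: the entire content is that a bound on $||\text{CC}(\cdot,\mathbf{y})||$ in terms of $||\mathbf{y}||$ automatically transfers to a perturbation bound the moment one knows the operator is linear. The only point demanding any care is the sign bookkeeping between $\Delta\mathbf{y}$ and $\Delta\text{CC}$ — the two differences are defined in opposite orders — but this distinction disappears as soon as norms are taken, so it does not affect the final inequalities.
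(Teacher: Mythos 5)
Your proof is correct and takes essentially the same approach as the paper: the paper derives this corollary in exactly this way, stating that it ``follows directly from Proposition~\ref{ccineq} and the fact that CC is linear with respect to the score vector $\mathbf{y}$.'' Your explicit sign bookkeeping (noting $\Delta\text{CC} = -\,\text{CC}(\mathbf{A}, p, T, \Delta\mathbf{y})$ and that the sign vanishes under the norm) merely spells out the step the paper leaves implicit.
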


This shows that, when $p\lambda_1<1$, then as long as the error in $\mathbf{y}$ is sufficiently small, the error in CC will be small as well. However, the larger $p\lambda_1$ is, the more a small error in $\mathbf{y}$ can become amplified as an error in CC.

Next we focus on the case that $p\lambda_1>1$. In this case, we have shown in the main body of the paper that for large T, contextual centrality is well-approximated by $(U_1^T\mathbf{y})U_1$, where $U_1$ is the eigenvector with largest eigenvalue. Thus in this case, the primary contribution $U_1^T\mathbf{y}$ is an essential quantity whose sign roughly determines the relative ranking of contextual centrality. Hence, we analyze its sensitivity to noise in $\mathbf{y}$. The error in primary contribution is simply $\mathbf{U}_1^T \Delta \mathbf{y}$, whose magnitude is bounded by $||\Delta \mathbf{y}||$. Thus if $\Delta \mathbf{y}$ is small enough so that $||\Delta \mathbf{y}||< \mathbf{U}_1^T\hat{\mathbf{y}}$, this perturbation will not affect the sign of the primary contribution, so the relative ranking in CC will tend to stay fixed. Otherwise, the relative ranking is at risk of flipping.

\paragraph{Contextual centrality as a random variable}
Next, to study the impact of stochasticity in $\mathbf{y}$, we suppose  that $\mathbf{y}$ is a multivariate random variable with mean vector $\hat{\mathbf{y}}$ and covariance matrix $\Sigma$. Let $\mathbf{B}:=\sum_{t=0}^T (p\mathbf{A})^t$. Since $\text{CC}(\mathbf{A}, p,T,\mathbf{y})=\mathbf{B} \cdot \mathbf{y}$ is a linear transformation of the multivariate normal variable $\mathbf{y}$, it is also a multivariate normal variable, with mean $\mathbf{B}\hat{\mathbf{y}}=\text{CC}(\mathbf{A}, p,T, \hat{\mathbf{y}})$ and covariance matrix $\mathbf{B} \boldsymbol{\Sigma} \mathbf{B}$.

To simplify, consider the case that $\boldsymbol{\Sigma}=\sigma^2 \mathbf{I}$, that is, the $y_i$ are uncorrelated and have the same standard deviation $\sigma$. Then the covariance matrix of $\text{CC}(\mathbf{A}, p,T,\mathbf{y})$ is $\sigma^2\mathbf{B}^2$.

That is, we have
\begin{equation*}
\text{Cov}(\text{CC}(\mathbf{A}, p,T,\mathbf{y})_i,\text{CC}(\mathbf{A}, p,T,\mathbf{y})_j)=\sigma^2 (\mathbf{B}e_i)\cdot (\mathbf{B}e_j), 
\end{equation*}
where $e_i$ are the standard basis vectors.

In particular, the coefficients of CC may be positively correlated even when those of $\mathbf{y}$ are uncorrelated, and their standard deviations are given by 
\begin{equation*}
\sigma(\text{CC}(\mathbf{A}, p,T,\mathbf{y})_i)=\sigma ||\mathbf{B}e_i||
\end{equation*}
Note that, by definition of $\mathbf{B}$, $\mathbf{B}e_i=\text{CC}(\mathbf{A}, p,T,e_i)$, whose $j$th coefficients represents the expected number of times node $i$ is reached by the diffusion process, if seeded at node $j$.

By Proposition \ref{ccineq}, we have the bound $\sigma(\text{CC}(\mathbf{A}, p,T,\mathbf{y})_i)=\sigma ||\mathbf{B}e_i||\le \frac{\sigma }{1-p\lambda_1}$ if $p\lambda_1<1$.

\subsection{Differences between contextual centrality and centrality measures developed on weighted networks}
There have been some studies which generalize centrality measures to weighted or signed networks. They focus on settings where edge weights represent the strength or the trustiness (a friend or a foe) of the social relationships. The network information captured by these centrality measures can be regarded as a special case of a weighted version of contextual centrality, where $\mathbf{A}$ is a weighted matrix, $p = 1$ and $\overline{\mathbf{y}} = \mathbf{1}$. Weights on network links emphasize social relationships but do not capture the heterogeneous contributions of the nodes - exogenous to the network structure - directly to the cascade payoff. Weighted links and weighted nodes characterize different network dynamics and diffusion objective. Let us provide a simple illustrative example to explain the differences better. Imagine a network with two disconnected communities, where one component consists of positive links and the other consists of negative links. Centrality developed on the weighted or signed network will rank the most-connected node as the top in the community with positive edges (i.e., individuals who all trust one another). However, for a particular marketing campaign, if all individuals in the positive community do not like the product, seeding any individuals in the positive community will hurt the campaign. 

For readers' reference, we provide an overview of centrality measures on weighted and signed networks. There are two main strands of work in this literature. First, some studies define new notions of the shortest path that take the weights of the links into account. There are multiple types of modifications: (1) take the inverse of the tie strengths as the shortest path lengths \cite{newman2001scientific,brandes2008variants}, (2) using a tuning parameter to trade-off tie strengths and the number of ties \cite{opsahl2010node}, (3) adding a temporal aspect to links to minimize the temporal latency \cite{williams2016spatio}. With these new notions, researchers extend existing path-based centrality measures. \cite{newman2001scientific} and \cite{brandes2008variants} extended closeness centrality and betweenness centrality to define the shortest path algorithm to be the least costly path with cost depending solely on tie weights. 
Opshal proposes a centrality measure with a generalized degree and shorted paths computation by adding a tuning parameter on tie strengths \cite{opsahl2010node}. Another strand of studies focused on the flow and diffusion processes. 
Kunegis developed a signed centrality measure using the left eigenvector of the signed network as a generalization of the eigenvector centrality with weighted edges \cite{kunegis2009slashdot}. Other studies develop algorithm-based ranking methods, extending PageRank or HITS. Shahriari computed the difference between the scores using PageRank or HITS algorithms for networks consisting of positive and negative links, respectively, as the new measure \cite{shahriari2014ranking}

\subsection{Theoretical results of contextual centrality for Erdos-Renyi networks} In the case that $A$ corresponds to an Erdos-Renyi graph $G(n,q)$, we have further theoretical results, in line with the results of Banerjee et al.\cite{banerjee2014gossip} on diffusion centrality. As is standard for Erdos-Renyi graphs, we assume each edge has independent probability $q$ of being present in the graph, where $q$ is a function of $n$, the number of nodes. Assume that $qn$ grows such that $log(n)\le qn\le \sqrt{n}$. We also assume that $T$ and $p$ are functions of $n$, and let $\textbf{y}$ denote the vector (depending on $n$) consisting of $y_1,\dots,y_n$ for some infinite sequence $\{y_i\}$. We suppress all dependency on $n$ for ease of notation. We further assume that the mean $\overline{\textbf{y}}$ has a limit $\overline{\mathbf{y}}$ as $n$ approaches infinity, which is reasonable by the law of large numbers if the $y_i$ are sampled from a random variable. With this background, we study the expected behavior of $E(\text{CC}(\mathbf{A}, p, T, \mathbf{y}))$.

Given two functions $f(n),g(n)$, we will say that $f$ approaches $g$ as $n$ approaches infinity, if $lim_{n\to \infty} \frac{f(n)}{g(n)}=1$. Then we have the following result.

\begin{theorem}
\label{expccdc}
Suppose $T=o(qn)$ and $\log(n)\le qn \le \sqrt{n}$. Then we can decompose $E(CC(\mathbf{A}, p, T, \mathbf{y})_i)=\overline{\mathbf{y}}E_1+y_iE_2$, where $E_1$ and $E_2$ are functions of $n,p,q,T$ but do not depend on $\mathbf{y}$ or $i$, such that

a)$E_1$ approaches $\frac{1-(npq)^{T+1}}{1-(npq)}$.

b)$E_2=o(E_1)$.

c)If $\overline{\mathbf{y}}\ne 0$, then $E(\text{CC}(\mathbf{A}, p, T, \mathbf{y}))$ approaches $\overline{\mathbf{y}}E(\text{DC}(\mathbf{A}, p, T, \mathbf{y}))$, where DC is diffusion centrality.
\end{theorem}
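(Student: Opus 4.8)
The plan is to reduce everything to the expected walk counts $E[(\mathbf{A}^t)_{ij}]$ and to exploit the fact that the law of $\mathbf{A}$ is invariant under relabelling the vertices of $G(n,q)$. Writing the $i$th coordinate as $\text{CC}(\mathbf{A},p,T,\mathbf{y})_i=\sum_{t=0}^T p^t\sum_{j=1}^n (\mathbf{A}^t)_{ij}\,y_j$ and taking expectations, exchangeability forces $E[(\mathbf{A}^t)_{ij}]$ to depend only on whether $i=j$. So I would set $a_t:=E[(\mathbf{A}^t)_{ij}]$ for $i\ne j$ and $b_t:=E[(\mathbf{A}^t)_{ii}]$, and use $\sum_{j\ne i}y_j=n\overline{\mathbf{y}}-y_i$ to obtain
\begin{equation*}
E[\text{CC}(\mathbf{A},p,T,\mathbf{y})_i]=\overline{\mathbf{y}}\underbrace{\sum_{t=0}^T p^t n a_t}_{E_1}+y_i\underbrace{\sum_{t=0}^T p^t(b_t-a_t)}_{E_2}.
\end{equation*}
This already yields the claimed decomposition, with $E_1$ and $E_2$ manifestly independent of $\mathbf{y}$ and $i$; everything that remains is estimating $a_t$ and $b_t$.

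For part (a) I would estimate $a_t$ by classifying length-$t$ walks from $i$ to $j$ according to how many distinct vertices they visit. The leading contribution is from self-avoiding paths: there are $(n-2)(n-3)\cdots(n-t)=n^{t-1}(1+o(1))$ of them, each present with probability $q^t$, so $a_t=n^{t-1}q^t(1+o(1))=\tfrac{(nq)^t}{n}(1+o(1))$ and $p^t n a_t=(npq)^t(1+o(1))$. Summing gives $E_1=\sum_{t=0}^T (npq)^t(1+o(1))$, which approaches $\frac{1-(npq)^{T+1}}{1-npq}$; the $t=0$ term is immaterial to leading order in the regime where this sum is large. This is exactly where the hypotheses enter: $nq\le\sqrt{n}$ together with $T=o(nq)$ forces $T^2/n\to 0$, which makes the falling factorial equal $n^{t-1}(1+o(1))$ uniformly in $t\le T$, and the same bound shows that walks reusing a vertex carry an extra factor of order $1/(nq)$ per defect and are therefore lower order. (Note also that for $G(n,q)$ one has $\lambda_1\approx nq$, so $npq\approx p\lambda_1$ is precisely the spreadability of the main text.)

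For part (b) I would bound the diagonal terms $b_t$, which come from \emph{closed} walks at $i$. On the locally tree-like sparse graph the dominant closed walks are backtracking walks supported on a tree, and a Catalan-type count gives $b_t\le C_t\,(nq)^{\lfloor t/2\rfloor}$ with $C_t$ a purely combinatorial constant; supports containing a cycle carry an extra factor $q\le n^{-1/2}$ and are negligible. Hence each return term obeys $p^t b_t\le (C/(nq))^{t/2}(npq)^t$ for an absolute constant $C$, so that $\sum_{t\ge 1}p^t b_t\le (C/(nq))^{1/2}E_1=o(E_1)$ because $nq\ge\log n\to\infty$. Since $\sum_t p^t a_t=E_1/n=o(E_1)$ as well, this yields $E_2=o(E_1)$, which is (b); the conclusion lives in the natural regime $E_1\to\infty$ (equivalently $npq\gtrsim 1$, i.e.\ $p\lambda_1\gtrsim 1$), where the remaining $O(1)$ pieces are negligible.

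Finally, for part (c) I would compute $E[\text{DC}(\mathbf{A},p,T)_i]=\sum_{t\ge 1} p^t\big(b_t+(n-1)a_t\big)$ and combine the two estimates above to get $E[\text{DC}_i]=E_1+E_2+O(1)=E_1(1+o(1))$ whenever $E_1\to\infty$. Then $E[\text{CC}_i]=\overline{\mathbf{y}}E_1+y_iE_2=\overline{\mathbf{y}}E_1(1+o(1))$, so for $\overline{\mathbf{y}}\ne 0$ the ratio $E[\text{CC}_i]/(\overline{\mathbf{y}}\,E[\text{DC}_i])\to 1$, which is (c). I expect the main obstacle to be the uniform walk-counting estimate underlying (a) and (b): one must organise walks by the multigraph of distinct edges they traverse and show that everything except self-avoiding paths (for $a_t$) and tree-backtracking walks (for $b_t$) is of strictly lower order, uniformly over all $t\le T$. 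This combinatorial core is precisely where $T=o(nq)$ and $\log n\le nq\le\sqrt{n}$ are used, and it mirrors the diffusion-centrality analysis of Banerjee et al.
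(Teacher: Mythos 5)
Your proposal follows the paper's own proof essentially step for step: the paper uses the identical exchangeability decomposition ($E_1 = n\sum_{t} p^t E(\mathbf{A}^t_{\text{diff}})$, $E_2 = \sum_t p^t\bigl(E(\mathbf{A}^t_{\text{same}})-E(\mathbf{A}^t_{\text{diff}})\bigr)$), isolates exactly your walk-counting core as Lemma~\ref{eccestimate} (off-diagonal entries $\approx (nq)^t/n$, dominated by self-avoiding paths, with $T=o(nq)$ controlling the "defect'' bookkeeping via the inductive partition of walks), and obtains part (c) by specializing $\mathbf{y}=\mathbf{1}$, just as you do; even the cavalier treatment of the $t=0$ term, which you flag, is shared by the paper.

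One intermediate claim of yours is false as stated, though it does not break the argument: you bound the diagonal by $b_t \le C_t (nq)^{\lfloor t/2\rfloor}$ on the grounds that "supports containing a cycle carry an extra factor $q$ and are negligible.'' A closed walk traversing a $t$-cycle through $i$ contributes $\sim n^{t-1}q^t = (nq)^t/n$, which exceeds $(nq)^{t/2}$ as soon as $(nq)^{t/2}>n$, i.e.\ for $t > 2\log n/\log(nq)$ --- a range of $t$ that is well inside $t\le T = o(nq)$ in the allowed regime (e.g.\ $nq=\log n$). This is precisely why the paper's Lemma~\ref{eccestimate} bounds the even-$t$ diagonal by $\frac{1}{1-4r}\bigl(\frac{(nq)^t}{n}+(2nq)^{t/2}\bigr)$ rather than by $(2nq)^{t/2}$ alone. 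With the corrected bound your proof of (b) goes through unchanged: the extra $(nq)^t/n$ term is of order $a_t$, and $\sum_t p^t a_t = E_1/n = o(E_1)$, a fact you already invoke.
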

In other words, if $\overline{\mathbf{y}}\ne 0$, then the term $E_1$ dominates, so the expected CC is uniform all nodes (in the limit as $n$ approaches infinity). Moreover, $\overline{\mathbf{y}}$ measures the magnitude of the diffusion as compared to DC, and the sign of $\overline{\mathbf{y}}$ determines the expected sign of CC. In contrast, if $\overline{\mathbf{y}}=0$, then CC equals $E_2$ so, on expectation, CC correlates perfectly with $\mathbf{y}$ itself. We note that in practice it is not likely for $\overline{\mathbf{y}}$ to equal $0$. However, if $\overline{\mathbf{y}}$ is close to $0$ and $n$ is not too large, then the term $E_2$ could still be significant, indicating that the expected CC will be correlated with the nodal evaluation vector $\mathbf{y}$.

This result can also be related to the tradeoff in Eq.~(\ref{eq:std_mean_tradeoff}). As implied by the Theorem, as long as $\bar{y}\ne 0$, then expected CC approaches $\bar{Y}E_1$, which in turn approaches $\bar{y}E(DC)$ as $n$ approaches infinity. Thus the second term of the tradeoff in Eq.~(\ref{eq:std_mean_tradeoff}) dominates, on expectation.

We also note that careful analysis will show that $E_2>0$, but that is beyond the scope of the present paper.

\begin{theorem}
\label{thresh}
If $p\lambda_1\ge (1+\epsilon)$ for some $\epsilon>0$, then $T=\frac{log(n)}{log(npq)}$ is a threshold for viral spread if $\overline{\mathbf{y}}\ne 0$, in the sense that 

a)If $T\le (1-\epsilon)\frac{log(n)}{log(npq)}$ for some $\epsilon>0$, then $E(CC(\mathbf{A}, p,T, \mathbf{y})_i)=o(n)$ for all $i$.

b)If, on the other hand, $T\ge (1+\epsilon)(\frac{log(pnq)}{log(n)})$, then $E(CC(\mathbf{A}, p, T, \mathbf{y})_i)=\Omega(n)$ for all $i$.
\end{theorem}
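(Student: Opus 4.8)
The plan is to reduce everything to the decomposition supplied by Theorem~\ref{expccdc} and then carry out a single logarithmic comparison. Since we assume $\overline{\mathbf{y}} \neq 0$, Theorem~\ref{expccdc} writes $E(\text{CC}(\mathbf{A}, p, T, \mathbf{y})_i) = \overline{\mathbf{y}} E_1 + y_i E_2$ with $E_2 = o(E_1)$; because $\overline{\mathbf{y}}$ tends to a nonzero constant while the $y_i$ stay bounded, the first term dominates, so $E(\text{CC}(\mathbf{A}, p, T, \mathbf{y})_i) = \Theta(E_1)$ uniformly in $i$, with $E_1$ approaching $\frac{1-(npq)^{T+1}}{1-npq}$. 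Thus the whole problem collapses to deciding, over the two ranges of $T$, whether this geometric sum has magnitude $o(n)$ or $\Omega(n)$. (I read the conclusions $o(n)$ and $\Omega(n)$ as statements about $|E(\text{CC}(\mathbf{A}, p, T, \mathbf{y})_i)|$, the sign being simply that of $\overline{\mathbf{y}}$.)

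Next I would pin down the size of $E_1$. For an Erdos-Renyi graph in the stated regime $\log n \le nq \le \sqrt{n}$ one has $\lambda_1 = (1+o(1))\,nq$, so the hypothesis $p\lambda_1 \ge 1+\epsilon$ translates into $npq \ge 1 + \epsilon/2$ for all large $n$; in particular $npq$ is bounded away from $1$. This is exactly what makes the geometric sum behave like its last term: writing $\frac{1-(npq)^{T+1}}{1-npq} = (npq)^T\,\frac{npq}{npq-1}\bigl(1-(npq)^{-(T+1)}\bigr)$ and using $npq - 1 \ge \epsilon/2$, the trailing factor lies between two positive constants depending only on $\epsilon$. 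Hence $E_1 = \Theta\bigl((npq)^T\bigr)$, and therefore $|E(\text{CC}(\mathbf{A}, p, T, \mathbf{y})_i)| = \Theta\bigl((npq)^T\bigr)$ with implied constants independent of $i$.

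The remainder is the logarithmic comparison of $(npq)^T$ against $n$, whose crossover is precisely $T^\ast = \frac{\log n}{\log(npq)}$. For part (a), $T \le (1-\epsilon)T^\ast$ gives $(npq)^T \le (npq)^{(1-\epsilon)\frac{\log n}{\log(npq)}} = n^{1-\epsilon} = o(n)$, so $|E(\text{CC}(\mathbf{A}, p, T, \mathbf{y})_i)| = o(n)$. For part (b), the matching bound $T \ge (1+\epsilon)T^\ast$ gives $(npq)^T \ge n^{1+\epsilon}$, whence $|E(\text{CC}(\mathbf{A}, p, T, \mathbf{y})_i)| = \Omega(n^{1+\epsilon}) = \Omega(n)$. (The exponent in the printed statement of (b) appears transposed; the argument makes clear the intended threshold is $T^\ast = \frac{\log n}{\log(npq)}$ on both sides.)

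The step needing the most care, and the main obstacle, is checking that Theorem~\ref{expccdc} is genuinely applicable at the relevant $T$, i.e.\ that its hypothesis $T = o(qn)$ holds near the threshold. Since $npq \ge 1+\epsilon/2$ forces $T^\ast = O(\log n)$ and $qn \ge \log n$, this is immediate whenever $npq \to \infty$ (then $T^\ast = o(\log n)$, hence $T^\ast = o(qn)$) and, more generally, whenever $qn = \omega(\log n)$; only the degenerate corner $npq = \Theta(1)$ together with $qn = \Theta(\log n)$ is delicate and would be handled separately. For part (b) I would additionally use monotonicity of $E_1$ in $T$ (the summands $(npq)^t$ are positive and $npq > 1$) to reduce the claim for all $T \ge (1+\epsilon)T^\ast$ to the single smallest such $T$, keeping the argument inside the admissible range. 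Finally, the eigenvalue estimate $\lambda_1 = (1+o(1))nq$ in the sparse regime $\log n \le nq \le \sqrt{n}$ is the one external ingredient I would cite rather than reprove.
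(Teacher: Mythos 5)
Your proposal is correct and takes essentially the same route as the paper's own proof: both reduce to Theorem~\ref{expccdc}, use that $npq$ bounded away from $1$ makes the geometric sum behave like its last term so that $|E(\text{CC}(\mathbf{A},p,T,\mathbf{y})_i)| = \Theta\bigl((npq)^T\bigr)$, and finish with the logarithmic comparison of $(npq)^T$ against $n$ at the threshold $T^\ast = \frac{\log n}{\log(npq)}$. Your write-up is in fact more careful than the paper's (which disposes of part (b) with ``the other direction follows similarly'' and never checks the $T = o(qn)$ hypothesis you flag), and you are right that the fraction in part (b) of the printed statement is transposed and should read $\frac{\log n}{\log(npq)}$.
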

Note that the threshold $T=\frac{log(n)}{log(npq)}$ given above is equal to $\frac{log(n)}{log(pE(\lambda_1))}$, since $E(\lambda_1)=nq$. We also note that the expected diameter of the Erdos-Reyni graph is $\frac{log(n)}{log(nq)}$, which is strictly smaller than the threshold given above.

To prove these theorems, we analyze $E(\mathbf{A}^t)$ for any $t$. Note that $E(\mathbf{A}^t)_{ij}$ is the weighted sum of all paths of length $t$ from $i$ to $j$, with each path $\pi$ weighted by $q^{d(\pi)}$, where $d(\pi)$ is the number of distinct edges along the path $\pi$. Note that by symmetry, the off-diagonal entries of $E(\mathbf{A}^t)$ are all the same, as are its diagonal entries; however, the diagonal entries are not necessarily equal to the off-diagonal ones.

We first prove the following lemma to aid our analysis.

\begin{lemma}
\label{helper}
Let $i,j,k$ be distinct numbers ranging from $1$ to $n$. Let $Z_{ij,k}(t)$ be the subset of paths of length $t$ from $i$ to $j$ which visit vertex $k$ at some point. Let $z_{ij,k}(t)$ be its weighted sum $\sum_{\pi\in Z_{ij,k}(t)} q^{d(\pi)}$. Then $z_{ij,k}\le \frac{t-1}{n-2}E(\mathbf{A}^t_{ij})$.
\end{lemma}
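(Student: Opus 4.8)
The plan is to exploit the symmetry of the Erdős–Rényi model among the $n-2$ vertices other than the endpoints $i$ and $j$, combined with a counting argument that controls visits to $k$ with multiplicity. Throughout I write a length-$t$ walk from $i$ to $j$ as a vertex sequence $\pi = (v_0, v_1, \dots, v_t)$ with $v_0 = i$, $v_t = j$, and $v_s \ne v_{s+1}$ for each $s$ (no self-loops, since $A_{ss}=0$), and I use the identity noted just above the lemma, $E(\mathbf{A}^t_{ij}) = \sum_{\pi} q^{d(\pi)}$, where the sum runs over all such walks.

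First I would reduce the ``visit at least once'' quantity to a sum over positions. Since $k$ is distinct from both $i$ and $j$, any walk counted in $Z_{ij,k}(t)$ must land on $k$ at one of the internal positions $m \in \{1, \dots, t-1\}$. Counting each such walk once is therefore bounded above by counting it once for every internal position at which it sits on $k$, so that
\begin{equation*}
z_{ij,k}(t) = \sum_{\pi \in Z_{ij,k}(t)} q^{d(\pi)} \le \sum_{m=1}^{t-1} \sum_{\pi:\, v_m = k} q^{d(\pi)},
\end{equation*}
because a walk that sits on $k$ at $c \ge 1$ internal positions contributes $q^{d(\pi)}$ once to the left side and $c$ times to the right side.

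The heart of the argument is then to bound each inner sum $S_m(k) := \sum_{\pi:\, v_m = k} q^{d(\pi)}$ by $\tfrac{1}{n-2} E(\mathbf{A}^t_{ij})$. For a fixed position $m$ I would partition all walks from $i$ to $j$ by the value of $v_m$, giving $E(\mathbf{A}^t_{ij}) = \sum_{v=1}^n S_m(v)$. The claim is that $S_m(k)$ takes the same value for every $k \notin \{i,j\}$: given two such vertices $k, k'$, the transposition $\tau = (k\ k')$ fixes both $i$ and $j$, hence maps length-$t$ walks from $i$ to $j$ bijectively to themselves, preserves the no-repeat condition on consecutive vertices, preserves the number of distinct traversed edges $d(\pi)$, and sends the event $\{v_m = k\}$ to $\{v_m = k'\}$; it is thus a weight-preserving bijection witnessing $S_m(k) = S_m(k')$. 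Since there are exactly $n-2$ vertices outside $\{i,j\}$ and all $S_m$ terms are nonnegative, dropping the $v=i$ and $v=j$ contributions yields $(n-2)\,S_m(k) = \sum_{k' \notin \{i,j\}} S_m(k') \le E(\mathbf{A}^t_{ij})$, i.e. $S_m(k) \le \tfrac{1}{n-2} E(\mathbf{A}^t_{ij})$. Summing over the $t-1$ internal positions gives $z_{ij,k}(t) \le \tfrac{t-1}{n-2} E(\mathbf{A}^t_{ij})$, as required.

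The step I expect to require the most care is the symmetry argument: I must verify that the relabeling $\tau$ genuinely preserves the weight $q^{d(\pi)}$ (it does, since $\tau$ permutes the edge set and so preserves the count of distinct traversed edges) and that it fixes the endpoints, which forces the restriction $k \ne i,j$ and is precisely why the denominator is $n-2$ rather than $n$. The only other point needing attention is the multiplicity step, where the inequality direction must be checked, since the position-indexed sum over-counts walks passing through $k$ more than once — but over-counting is harmless for an upper bound.
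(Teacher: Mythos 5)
Your proof is correct and takes essentially the same approach as the paper's: decompose by the internal position at which $k$ appears (over-counting walks that visit $k$ several times, which is harmless for an upper bound), and bound each positional sum by $\frac{1}{n-2}E(\mathbf{A}^t_{ij})$ using symmetry among the $n-2$ vertices distinct from $i$ and $j$, then sum over the $t-1$ positions. The paper states this argument in three terse sentences; your write-up simply makes explicit the transposition bijection and the multiplicity inequality that the paper leaves implicit.
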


\begin{proof} There are $(t-1)$ possible indices to place the vertex $k$. For each fixed index, the weighted sum of all paths with vertex $k$ at that index is $\le \frac{1}{n-2}E(\mathbf{A}^t_{ij})$, which follows by symmetry with respect to the $n-2$ possible choices of $k$. Combining these factors yields the desired bound.
\end{proof}

We now move on to the estimates of $E(\mathbf{A}^t_{ij})$.

\begin{lemma}
\label{eccestimate}
For the purposes of this lemma assume that  $\frac{t}{nq}\le r<\frac{1}{4}$ for some $r$. Then we have

a) $(1-2r)\frac{(nq)^t}{n}\le E(\mathbf{A}^t)_{ij}\le (\frac{1}{1-4r})\frac{(nq)^t}{n}$, if $i\ne j$ or if $i=j$ and $t$ is odd.

b) $(1-2r)\frac{(nq)^t}{n}\le E(\mathbf{A}^t)_{ii}\le (\frac{1}{1-4r})(\frac{(nq)^t}{n}+(2nq)^{t/2})$ if $t$ is even.
\end{lemma}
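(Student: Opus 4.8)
The plan is to read $E(\mathbf{A}^t)_{ij}$ combinatorially and then separate one dominant family of walks from a geometric tail of ``defective'' ones. As already observed before the lemma, $E(\mathbf{A}^t)_{ij}=\sum_{\pi}q^{d(\pi)}$, where the sum is over all walks $\pi$ of length $t$ from $i$ to $j$ and $d(\pi)$ is the number of distinct edges traversed; this uses only that distinct edges are independent Bernoulli$(q)$ and that $E(A_e^m)=q$ for every $m\ge 1$. The entire argument is then an accounting of this one weighted sum, and both parts (a) and (b) follow once the contribution of each walk-type is pinned down.

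For the lower bounds I would retain only the ``generic'' walks that visit the maximum possible number of distinct vertices. When $i\ne j$ these are the simple paths of length $t$, each having $d(\pi)=t$ and weight $q^t$, and numbering exactly $(n-2)(n-3)\cdots(n-t)=n^{t-1}\prod_{s=2}^{t}(1-s/n)$. Using $\prod(1-x_s)\ge 1-\sum x_s$ together with $t\le rnq$ and $nq\le\sqrt n$ (so that $t^2\le r^2(nq)^2\le r^2 n$), the product is at least $1-r^2\ge 1-2r$, giving the claimed bound $(1-2r)\frac{(nq)^t}{n}$; since every $q^{d(\pi)}$ is nonnegative, discarding all other walks is legitimate. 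The diagonal case $i=j$ with $t$ odd is identical, with the generic family now the simple $t$-cycles through $i$, again of weight $q^t$ and count of order $n^{t-1}$.

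For the upper bound I would classify walks by their ``defect'' $k$, meaning how far below the generic maximum their number of distinct vertices falls, and show each added defect costs a multiplicative factor of at most $4r$. The damping tool is Lemma~\ref{helper}: any fixed interior vertex is hit by only a weighted fraction $\le\frac{t-1}{n-2}$ of all walks, so forcing a revisit costs a factor of order $t/n$. Peeling off one defect at a time and summing $\sum_{k\ge 0}(4r)^k=\frac{1}{1-4r}$ then yields the prefactor. The delicate point, which I expect to be the main obstacle, is that a repeated vertex arises in two qualitatively different ways that must be bounded separately: an immediate backtrack along an edge, which lowers the vertex count by one and also repeats an edge, so it trades the revisit cost $t/n$ for a weight loss $q^{-1}$ and contributes $O(t/nq)=O(r)$ per defect; or the closing of a genuine cycle of length $\ge 3$, which lowers the vertex count but keeps all edges distinct, hence is weighted $q^t$ but is rarer. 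Checking that the latter is also $O(r)$ per defect rather than $O(t^2/n)$ is exactly where the hypothesis $nq\le\sqrt n$ enters: it forces $qt\le r<1$, so cycle-type excursions are never larger than backtrack-type ones and the per-defect cost stays at $O(r)$.

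Finally the diagonal even case requires one extra family. When $i=j$ and $t$ is even there is a second near-generic family beyond the $t$-cycles, the ``out-and-back'' walks that run out along a simple path of length $t/2$ and retrace it: such a walk visits about $t/2$ new vertices but traverses only $t/2$ distinct edges, so it has weight $q^{t/2}$ and count of order $n^{t/2}$, contributing a term of size $(2nq)^{t/2}$, with the factor $2^{t/2}$ coming from the number of retrace patterns. Bounding this family by the same geometric-series argument and adding it to the cycle contribution produces the additional $(2nq)^{t/2}$ summand and the common factor $\frac{1}{1-4r}$ in part~(b).
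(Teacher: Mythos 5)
Your lower bound is correct and is essentially the paper's own (the paper counts walks with no self-edges, each of weight at least $q^t$; you count simple paths of weight exactly $q^t$ — both give $(1-2r)\frac{(nq)^t}{n}$). The genuine gap is in the upper bound, and it is not merely missing detail: the inductive claim you build everything on — classify walks by vertex-deficiency $k$ and show ``each added defect costs a multiplicative factor of at most $4r$'' by ``peeling off one defect at a time'' — is false as stated, because the natural peeling maps do not decrease weight. Concretely, consider walks of length $t$ from $i$ to $j$ containing a doubled two-edge segment, $\cdots a\,v\,b \cdots a\,v\,b \cdots$ (defect $3$: three repeated vertices, two repeated edges, weight $q^{t-2}$), and peel the middle vertex of the second occurrence to a fresh vertex $w$, giving $\cdots a\,v\,b \cdots a\,w\,b \cdots$ (defect $2$, all $t$ edges distinct, weight $q^{t}$). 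The weighted count of the child family is $\frac{1}{nq^{2}}$ times that of its parent family, and under the standing assumption $nq\le \sqrt{n}$ one has $nq^{2}\le 1$, so this ratio is at least $1$ — for instance it equals $n/\log^{2}n$ when $nq=\log n$ — rather than at most $4r$. The defect-$k$ class is controlled only \emph{in aggregate} against the generic (defect-$0$) class (your doubled-segment family is of relative size $\frac{t^2}{n(nq)^2}\le \frac{r^2}{n}$ versus generic, so no contradiction there); but organizing that aggregate comparison without a step-by-step peeling is exactly the hard part of the lemma, which your sketch assumes rather than supplies.

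This is where the paper's proof does real work, via a two-stage decomposition that sidesteps the trap above. First it restricts to non-backtracking walks $Y_{ij}(t)$ (note your doubled-segment example is non-backtracking, so it lands here) and runs a recursion indexed by the \emph{first edge that is never revisited}; the key device is Lemma \ref{helper}, which lets every forced vertex-revisit be charged a factor $t$ rather than $n$, and an induction then bounds $y_{\mathrm{diff}}(t)$. Second, arbitrary walks are reduced to non-backtracking cores by backtrack insertions $v\mapsto vwv$, each insertion charged $(t+nq)\le 2nq$, and a geometric series with ratio of order $r$ produces the prefactor $\frac{1}{1-4r}$; the pure-insertion term $m=t/2$ is what yields the extra $(2nq)^{t/2}$ in part (b). On that last point, your accounting of the even diagonal case is also off: the ``retrace patterns'' of closed excursions whose every edge is doubled are rooted plane trees, numbering Catalan $\sim 4^{t/2}$, not $2^{t/2}$ (a constant-level issue which, incidentally, also strains the constant $2$ in the paper's own statement — e.g.\ at $t=10$ there are $C_5=42>2^5$ tree patterns — though it is harmless for Theorems \ref{expccdc} and \ref{thresh}, where only the order of magnitude of the diagonal term matters). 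In short: right framework and correct lower bound, but the central multiplicative-defect step of your upper bound is unjustified and, in the per-defect form you state it, wrong.
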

\begin{proof}
Let us represent a path by the sequence of the vertices it visits. A path $\pi$ of length $t$ from $i$ to $j$ is represented as $iv_1v_2\cdots v_{t-1}j$, where $i$ and $j$ will also be labeled $v_0$ and $v_t$, respectively.
        
We begin by proving the lower bounds. We have $E(\mathbf{A}_{ij}^t)\ge (n-2)^{t-1}q^t$. Indeed, there are more than $(n-2)^{t-1}$ legitimate paths in $X_{ij}(t)$ (under the constraint of no self-edges), and each one has at most $t$ distinct edges. Now, $(n-2)^{t-1}\ge n^{t-1}-2t(n)^{t-2}= n^{t-1}(1-\frac{2t}{n})\ge (1-2r)n^{t-1}$ since $\frac{t}{n}\le \frac{t}{qn}\le r$. 
        
Next, we calculate the upper bounds. Suppose that $t\ge 1$. Let $Y_{ij}(t)\subset X_{ij}(t)$ consist of those paths in which edges are never repeated immediately, that is, $v_l\ne v_{l+2}$ for any index $l$. Let $y_{ij}(t)=\sum_{\pi\in Y_{ij}(t)} q^{d(\pi)}$ be its weighted sum. We further partition $Y_{ij}(t)$ as follows. For each $k=1,\dots,(t-1)$, let $Y_{ij,k}(t)\subset Y_{ij}(t)$ be the subset of those paths for which $k$ is the smallest index such that the edge $v_{k-1}v_k$ is not revisited later in the path, and $v_k\ne j$. Then $Y_{ij}(t)=\bigsqcup_{k=0}^{t-1} Y_{ij,k}(t)$. Let $y_{ij,k}(t)$ be the weighted sum of $Y_{ij,k}(t)$. Also, let $y_{diff}(t)$ and $y_{same}(t)$ denote the values of $y_{ij}(t)$ in the cases $i\ne j$ and $i=j$, respectively.

We will use the following properties for paths $\pi\in Y_{ij,k}(t)$. Given $\pi$, let $\pi'\in Y_{v_kj}(t-k)$ be the truncated path $v_k,\dots,v_{t-1},j$. We note that $\pi$ has at least one edge that $\pi'$ does not, namely $v_{k-1}v_k$, by definition of $k$. Thus $d(\pi)\ge d(\pi')+1$. Furthermore, we note that every node $v_1,\dots,v_{k-1}$ must be present in $\pi'$. Indeed, for each such vertex $v$, consider the greatest index $l$ such that $v_l=v$. If $l<k$, then, by definition of $k$, that means either $v_l=j$, in which case it appears in $\pi'$, or the edge $v_{l-1}v_l$ reappears later in the path. By assumption that $\pi\in Y_{ij}(t)$, this edge cannot be repeated immediately; hence $v=v_l$ itself must reappear later, contradicting the description of the index $l$. So, $l\ge k$, that is, $v$ indeed appears in $\pi$'. 

These observations imply the following bound: 
\begin{equation}
\label{looseboundyij}
 y_{ij,k}(t)\le t^{k-1}nqy_{diff}(t-k)
\end{equation}

Indeed, to specify a path in $Y_{ij,k}(t)$, we first choose $v_k$ from among $\le n$ possibilities. Then we choose the truncated path $\pi'$ as described above from $Y_{v_kj}(t-k)$, whose weighted sum is $y_{v_k,j}(t-k)$. Then we choose the $k-1$ vertices $v_1,\dots,v_{k-1}$. Each of them is repeated in $\pi'$, hence may be chosen from among the $\le t$ vertices of $\pi'$. Finally, since $d(\pi)\ge d(\pi')+1$, we introduce the additional factor of $q$. 

Now we focus on the case that $i\ne j$. 

If $k\ge 1$, we can improve our bound further. Notice that, since $k>1$, the starting vertex $i$ must appear in the path $\pi'$. So, either $i=v_k$, or $i\ne v_k$. In the former case, we can eliminate a factor of $n$ from (\ref{looseboundyij}), and in the latter case, we can introduce a factor of $\frac{t}{n}$ into (\ref{looseboundyij}), by Lemma \ref{helper} (Note the Lemma applies since $i,j,$ and $v_k$ are assumed distinct). We thus obtain the tighter bound
\begin{equation}
\label{tightboundyij}
y_{ij,k}(t)\le t^{k}q\cdot y_{diff}(t-k)
\end{equation}

Now we can prove by induction that $y_{diff}(t)\le (nq+2)^t$. Indeed, under this inductive hypothesis, the above bounds yield
\begin{equation*}
y_{ij,1}(t)\le nq(nq+2)^{t-1}
\end{equation*}
and
\begin{equation*}
\sum_{k=2}^{t-1} y_{ij,k}(t)\le \sum_{k=2}^{\infty} t^qk(nq+2)^{t-k}\le t^2q(nq+2)^{t-2}\frac{1}{1-r}
\end{equation*}
\begin{equation*}
\le 2(nq+2)^{t-1}
\end{equation*}
Where we used the fact that $(t^2)\le (nq)^2\le n$ and $\frac{1}{1-r}\le 2$.
Combining these bounds together we obtain, as desired, that 
\begin{equation*}
y_{ij}(t)=y_{ij,1}(t)+\sum_{k=2}^{t-1}y_{ij,2}(t)\le (nq+2)(nq+2)^{t-1}=(nq+2)^t
\end{equation*}
        
Next, we plug in this bound for $y_{diff}(t)$ into $(\ref{looseboundyij})$, to obtain a bound for $y_{ij}(t)$ (even if $i=j$). We have 
\begin{equation*}
y_{ij}\le \sum_{k=1}^{\infty}t^{k-1}(nq+2)^{t-k+1}\le \frac{1}{1-r}(nq+2)^t
\end{equation*}

Now, it is convenient to further bound $(nq+2)^t\le \frac{1}{1-2r}(nq)^t$. Indeed, $(nq+2)^t=\sum_{k=0}^t \binom{t}{k} 2^k n^{t-k}\le \sum_{k=0}^\infty (2t)^k(nq)^{t-k}$, which is a geometric series with ratio $\frac{2t}{nq}\le 2r$, bounded by $\frac{1}{1-2r}(nq)^t$.

Hence, we obtain the following bound on $y_{ij}(t)$:
\begin{equation}
\label{yineq}
y_{ij}(t)\le \frac{1}{(1-r)(1-2r)}(nq)^t\le \frac{1}{1-3r}(nq)^t
\end{equation}
We emphasize that this inequality holds only if $t\ge 1$.        
Finally, we extend our analysis from the $Y_{ij}(t)$ to all paths. Any arbitrary path from $i$ to $j$ of length $t$ may be obtained by starting with a path in $Y_{ij}(t-2m)$, for some $0\le m\le t/2$ and performing a sequence of $m$ insertions, replacing a vertex $v$ with  $vwv$ instead, for some vertex $w$. We obtain the bound
\begin{equation}
\label{ccsum}
E(\mathbf{A}^t_{ij})\le \sum_{m=0}^{\lfloor t/2 \rfloor} y_{ij}(t-2m)\cdot (2nq)^m
\end{equation}
Indeed, for each insertion operation, there are two cases: either the inserted vertex $w$ is already present in the path, so it can be chosen from among $\le t$ vertices; or it is not already present, in which case it can be chosen from among $\le n$ vertices and introduces a new edge, for an additional factor of $q$. Combining the two possibilities, each insertion operation introduces a factor of $(t+nq)\le 2nq$.
        
To evaluate this sum, we need to consider the two cases outlined in the statement of this lemma.
        
a) Suppose that either $i\ne j$, or $i=j$ and $t$ is odd. In this case, note that the bound (\ref{yineq}) can be applied to each $y_{ij}(t-2m)$, since if $t$ is odd, then $t-2m\ge 1$; and if $i\ne j$, we have $y_{ij}(0)=0$ regardless. Combining these bounds with (\ref{ccsum}), we obtain
\begin{equation*}
E(\mathbf{A}^t_{ij})\le \frac{1}{(1-3r)}\sum_{m=0}^{\infty}\frac{1}{n}2^m(nq)^{t-m}\le \frac{1}{1-4r}\frac{1}{n}(nq)^t
\end{equation*}
by a geometric series with ratio $\frac{2}{nq}<r$. This completes the proof of part a) of the lemma.
        
b) Now suppose that $i=j$ and $t$ is even. The sum in (\ref{ccsum}) can be analyzed in the same way as in a), but with an extra term of $(2qn)^{t/2}$ corresponding to the case $m=\frac{t}{2}$.
\end{proof}

We are now ready to prove Theorem \ref{expccdc}.

\begin{proof} By definition, $E( \text{CC}(\mathbf{A}, p, T, \mathbf{y}))=E(\sum_{t=0}^T p^t\mathbf{A}^t\mathbf{y})$. By linearity of expectation, this equals $\sum_{t=0}^T p^tE(\mathbf{A}^t)\mathbf{y}$.  Now, for each $t$ and each $i$, we have $(E(\mathbf{A}^t)\mathbf{y})_i=\sum_{j=0}^n y_j p^tE(\mathbf{A}^t_{ij})$. By separating the terms with $i=j$ from the terms with $i\ne j$, this equals $n\overline{\mathbf{y}} p^tE(\mathbf{A}^t_\text{diff}) +y_ip^t\cdot (E(\mathbf{A}^t_\text{same})-E(\mathbf{A}^t_\text{diff}))$, so we can write
\begin{equation*}
E(\text{CC}(\mathbf{A}, p,T, \mathbf{y})_i)=\overline{\mathbf{y}}E_{1}+y_i E_{2}
\end{equation*}
where
\begin{equation*}
E_{1}=n\sum_{t=0}^T p^tE(\mathbf{A}^t_\text{diff})
\end{equation*}
and
\begin{equation*}
E_{2}=\sum_{t=0}^T p^t\cdot (E(\mathbf{A}^t_\text{same})-E(\mathbf{A}^t_\text{diff}))
\end{equation*}

a) By Lemma \ref{eccestimate}, we know that $E_{1}$ can be bounded
\begin{equation*}
(1-2r)\sum_{t=0}^T (npq)^t\le E_{1}\le \frac{1}{1-4r} \sum_{t=0}^T (npq)^t
\end{equation*}
where $r=\frac{T}{nq}$. Since we assume this ratio approaches $0$, these bounds imply that indeed $E_{1}$ approaches $\sum_{t=0}^T (npq)^t=\frac{(npq)^{T+1}}{1-npq}$ as $n$ tends to infinity.

b) Next, we show that $E_{2}=o(E_{1})$. Indeed, we again use Lemma \ref{eccestimate}. We have
\begin{equation*}
|E_{2}|\le \sum_{t=0}^T p^t\cdot (E(\mathbf{A}^t_\text{same})+E(\mathbf{A}^t_\text{diff}))
\end{equation*}
\begin{equation*}
\le \frac{1}{1-4r}(\sum_{t=0}^T (p^t(2nq)^{t/2})+\frac{(npq)^t}{n})
\end{equation*}
so the result follows since both terms $p^t(2nq)^{t/2}$ and $\frac{(npq)^t}{n}$ are lower-order than $(npq)^t$.

c) Diffusion centrality is a special case of contextual  centrality in which $\mathbf{y}=\mathbf{1}$, which has mean $1$. The result follows by part a), together with the fact that $E_1$ dominates over $E_2$ whenever $\overline{\mathbf{y}}\ne 0$ by part b).
\end{proof}

Next, we prove Theorem \ref{thresh}.

\begin{proof}
Suppose $\overline{\mathbf{y}}\ne 0$ and $pE(\lambda_1)\ge (1+\epsilon)$ and that. For Erdos-Renyi graphs, $E(\lambda_1)=nq$, so $pnq\ge (1+\epsilon)$. In this case, it follows from Theorem \ref{expccdc} that $E(CC)_i$ approaches $\overline{\mathbf{y}}(pnq)^{T}$. If $T\le (1-\epsilon)\frac{log(n)}{log(npq)}$ for some $\epsilon>0$, then $log(|E(CC)_i|)\le C+log(|(pnq)^T|)$ for some constant $C$, which equals $C+Tlog(pnq)\le C+(1-\epsilon)log(n)$ so $T=O(n^{1-\epsilon})=o(n)$. The other direction follows similarly.
\end{proof}

\pagebreak

\section{Supplementary figures for empirical analysis}

\subsection{Predictive power of contextual centrality in eventual adoptions}

Here we include the supplementary results to examine the robustness of the predictive power of contextual centrality in the eventual adoption outcomes similar to Fig.~\ref{fig:r2}. We extend the linear regression models to (1) without controlling for village size, and (2) with additional controls.

\begin{figure*}[ht!]
\centering
\subfloat[Adoption of microfinance in Indian villages]{\includegraphics[width=0.5\linewidth]{./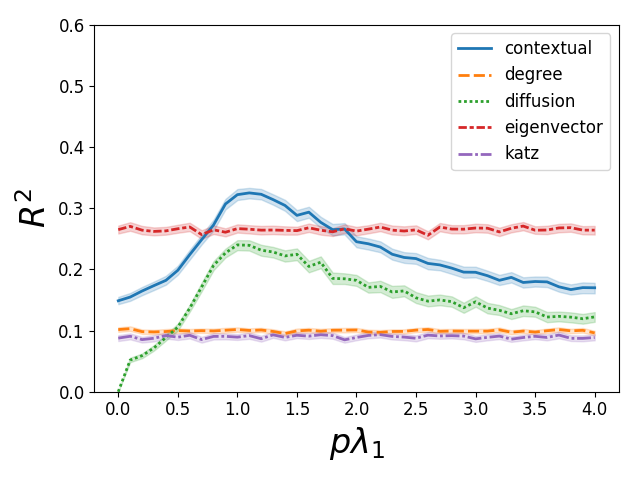}}
\subfloat[Adoption of weather insurance in Chinese villages]{\includegraphics[width=0.5\linewidth]{./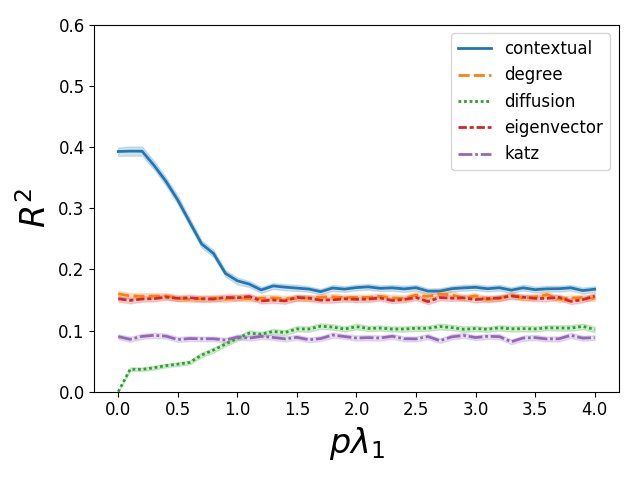}}
\caption{Predictive power of contextual centrality comparing with other centrality measures without any controls. }
\label{fig:r_squared_no_controls}
\end{figure*}

\begin{figure*}[ht!]
\centering
\subfloat[Adoption of microfinance in Indian villages]{\includegraphics[width=0.5\linewidth]{./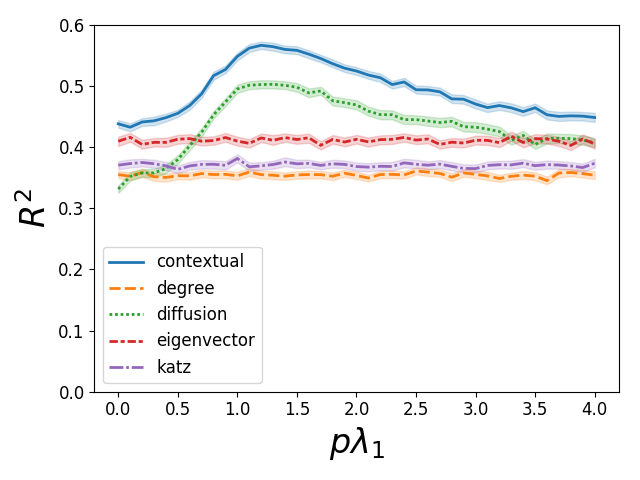}}
\subfloat[Adoption of weather insurance in Chinese villages]{\includegraphics[width=0.5\linewidth]{./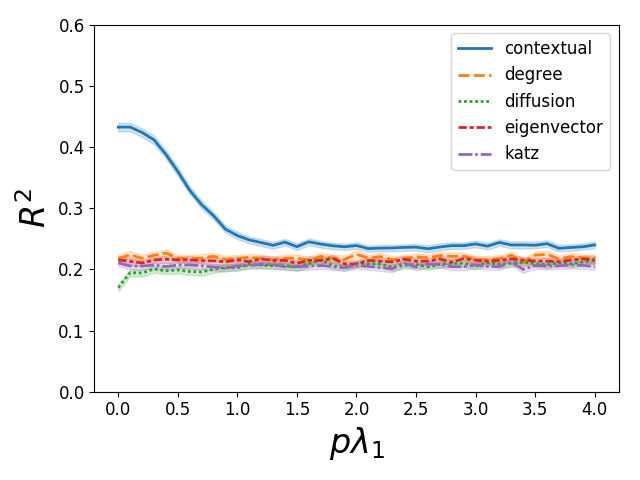}}
\label{fig:r_squared_all_controls}
\caption{ Predictive power of contextual centrality with additional controls. For (a), we use village size, savings, self-help group participation, fraction of general caste members, and the fraction of village that is first-informed as done in \cite{banerjee2013diffusion}. For (b), we use village size, number of first-informed households, and fraction of village that is first-informed.
}
\end{figure*}

\clearpage

\subsection{Performance relative to other centralities on random networks}

Here we show supplementary results corresponding to Fig.~\ref{fig:perform_cc_simulated} with figures for degree, eigenvector, Katz and diffusion centrality.

\begin{figure*}[ht!]
\centering
\subfloat[Barabasi-Albert model]{\includegraphics[width=.4\linewidth]{./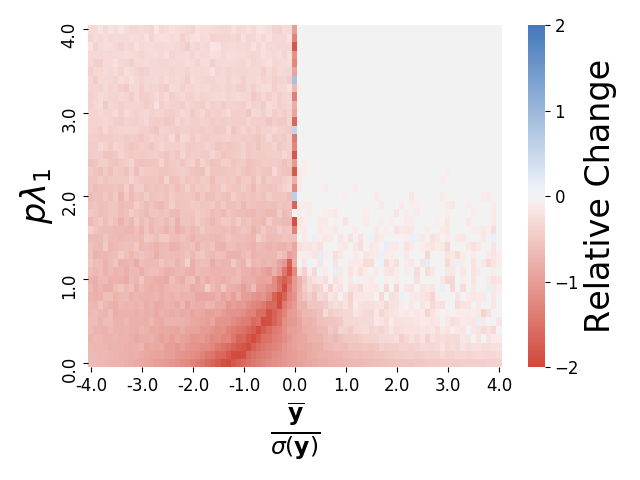} \label{fig:relative_change_degree_barabasi_albert}}
\subfloat[Erdos-Renyi model]{\includegraphics[width=.4\linewidth]{./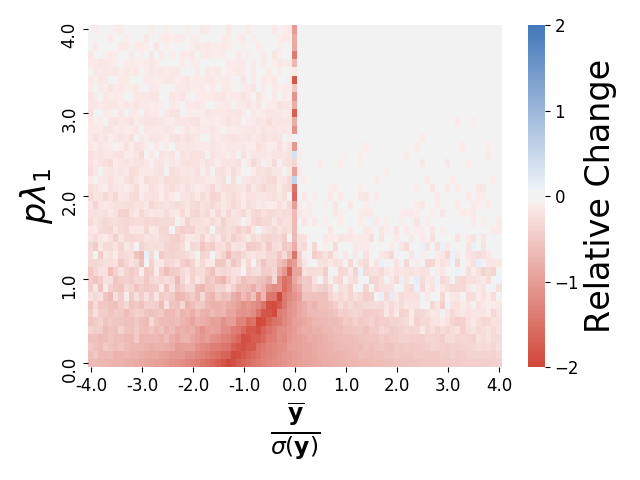} \label{fig:relative_change_degree_erdos_renyi}} \\
\subfloat[Watts-Strogatz model] {\includegraphics[width=.4\linewidth]{./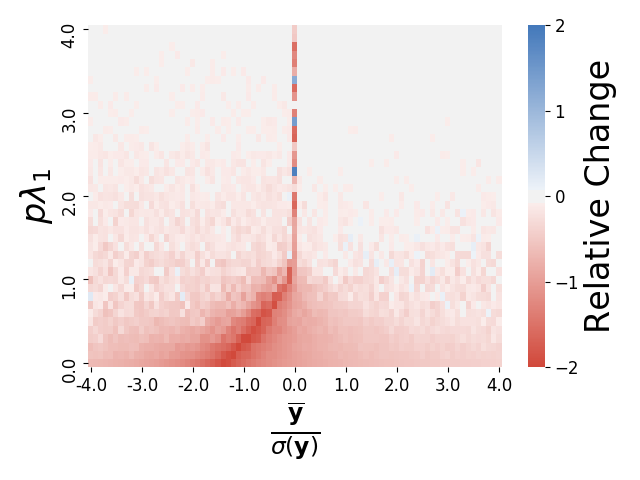} \label{fig:relative_change_degree_watts_strogatz}}
\subfloat[All models]{\includegraphics[width=.4\linewidth]{./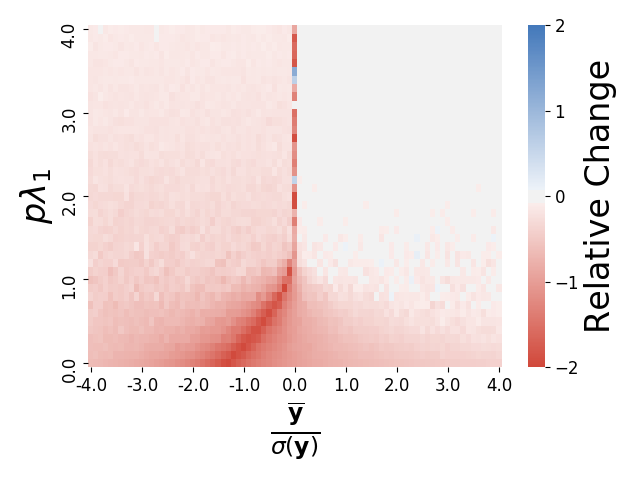}
\label{fig:relative_change_degree_all}}
\label{fig:perform_degree_synthetic}
\caption{Performance of degree centrality relative to other centralities on random networks. Each plot shows the relative change, computed as $\frac{a - b}{max(|a|, |b|)}$ where $a$ is degree centrality's average payoff and $b$ is the maximum average payoff of the other centralities, for varying values of $\frac{\overline{\mathbf{y}}}{\sigma(\mathbf{y})}$ and $p \lambda_1$.}
\end{figure*}

\begin{figure*}[ht!]
\centering
\subfloat[Barabasi-Albert model]{\includegraphics[width=.4\linewidth]{./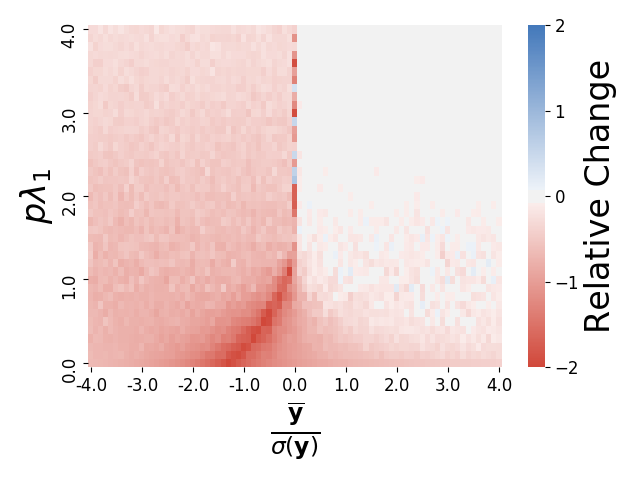} \label{fig:relative_change_diffusion_barabasi_albert}}
\subfloat[Erdos-Renyi model]{\includegraphics[width=.4\linewidth]{./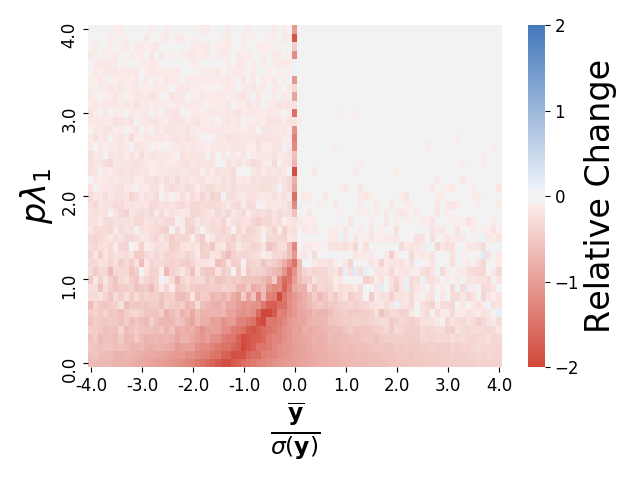} \label{fig:relative_change_diffusion_erdos_renyi}} \\
\subfloat[Watts-Strogatz model] {\includegraphics[width=.4\linewidth]{./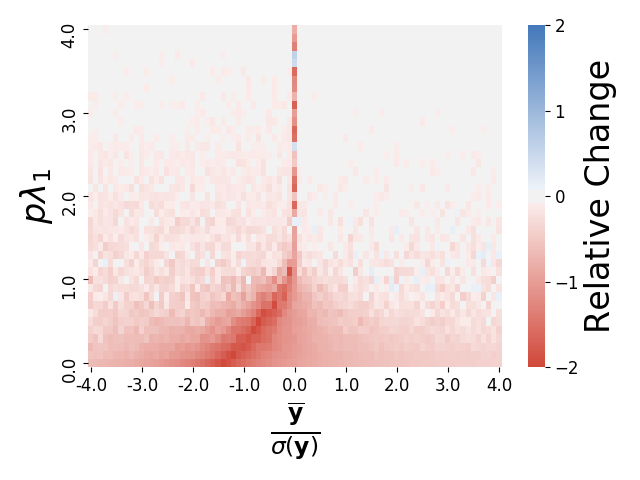} \label{fig:relative_change_diffusion_watts_strogatz}}
\subfloat[All models]{\includegraphics[width=.4\linewidth]{./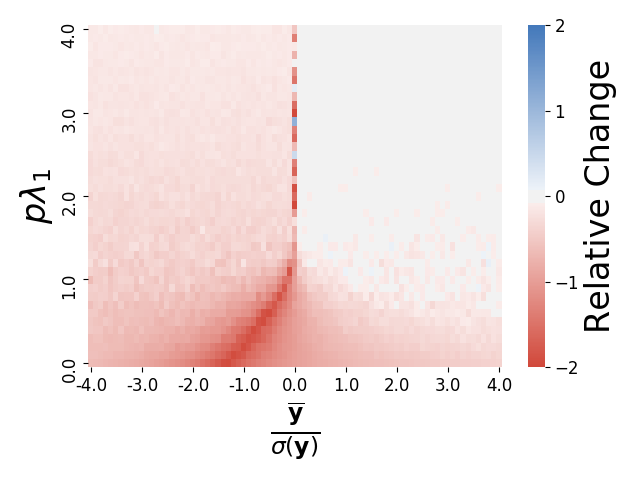}
\label{fig:relative_change_diffusion}}
\label{fig:perform_diffusion_synthetic}
\caption{Performance of diffusion centrality relative to other centralities on random networks. Each plot shows the relative change, computed as $\frac{a - b}{max(|a|, |b|)}$ where $a$ is diffusion centrality's average payoff and $b$ is the maximum average payoff of the other centralities, for varying values of $\frac{\overline{\mathbf{y}}}{\sigma(\mathbf{y})}$ and $p \lambda_1$.}
\end{figure*}

\begin{figure*}[ht!]
\centering
\subfloat[Barabasi-Albert model]{\includegraphics[width=.4\linewidth]{./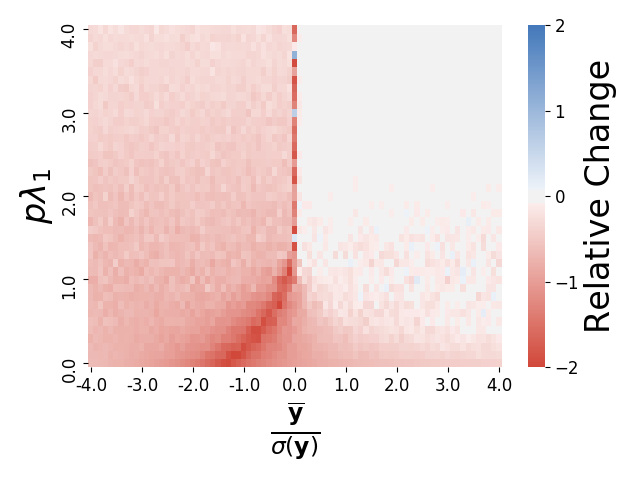} \label{fig:relative_change_eigenvector_barabasi_albert}}
\subfloat[Erdos-Renyi model]{\includegraphics[width=.4\linewidth]{./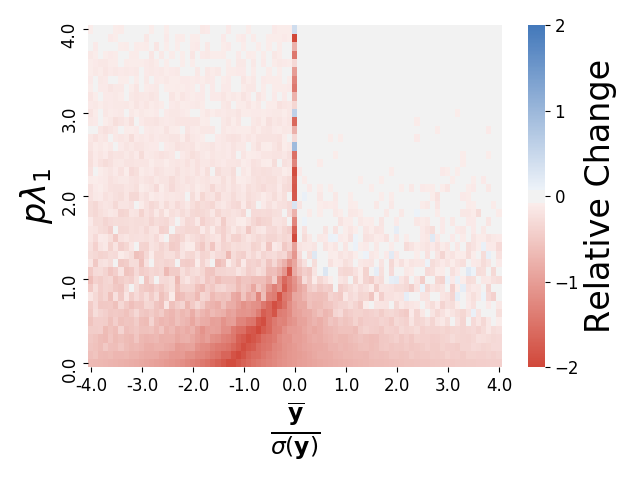} \label{fig:relative_change_eigenvector_erdos_renyi}} \\
\subfloat[Watts-Strogatz model] {\includegraphics[width=.4\linewidth]{./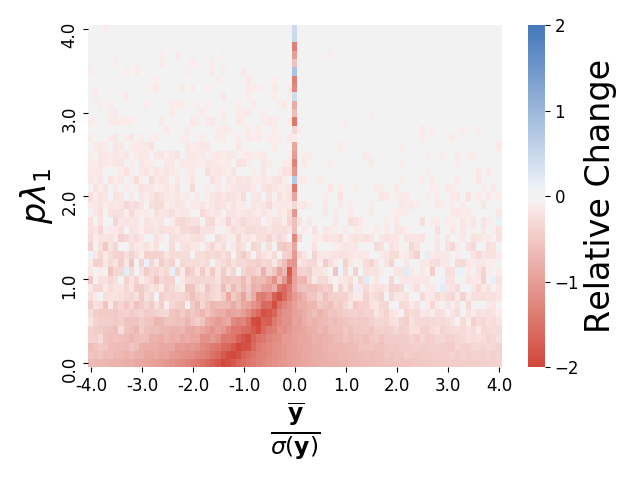} \label{fig:relative_change_eigenvector_watts_strogatz}}
\subfloat[All models]{\includegraphics[width=.4\linewidth]{./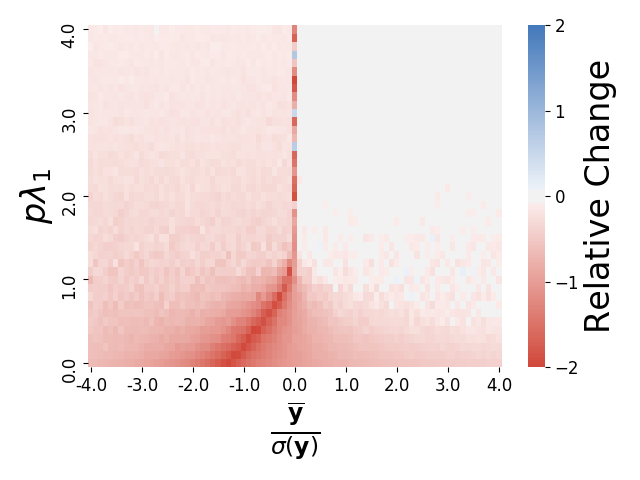}
\label{fig:relative_change_eigenvector}}
\label{fig:perform_eigenvector_synthetic}
\caption{Performance of eigenvector centrality relative to other centralities on random networks. Each plot shows the relative change, computed as $\frac{a - b}{max(|a|, |b|)}$ where $a$ is eigenvector centrality's average payoff and $b$ is the maximum average payoff of the other centralities, for varying values of $\frac{\overline{\mathbf{y}}
}{\sigma(\mathbf{y})}$ and $p \lambda_1$.}
\end{figure*}

\begin{figure*}[ht!]
\centering
\subfloat[Barabasi-Albert model]{\includegraphics[width=.4\linewidth]{./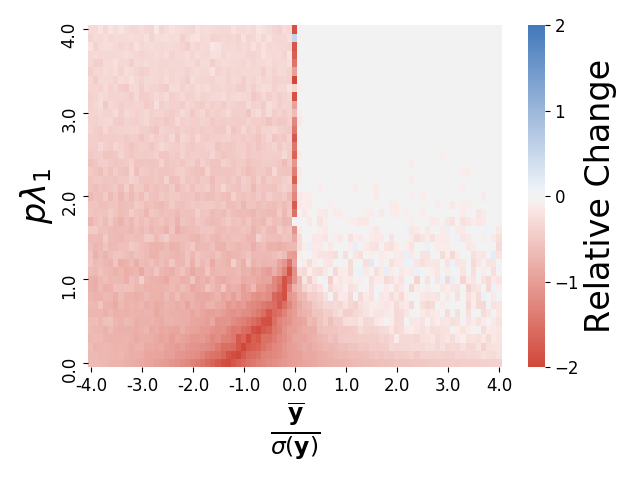} \label{fig:relative_change_katz_barabasi_albert}}
\subfloat[Erdos-Renyi model]{\includegraphics[width=.4\linewidth]{./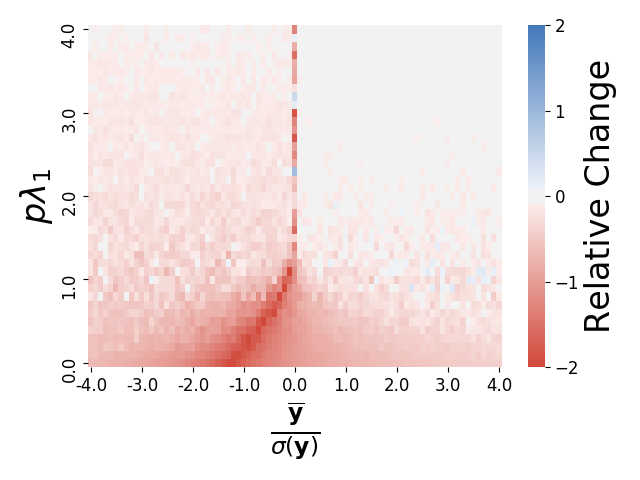} \label{fig:relative_change_katz_erdos_renyi}} \\
\subfloat[Watts-Strogatz model] {\includegraphics[width=.4\linewidth]{./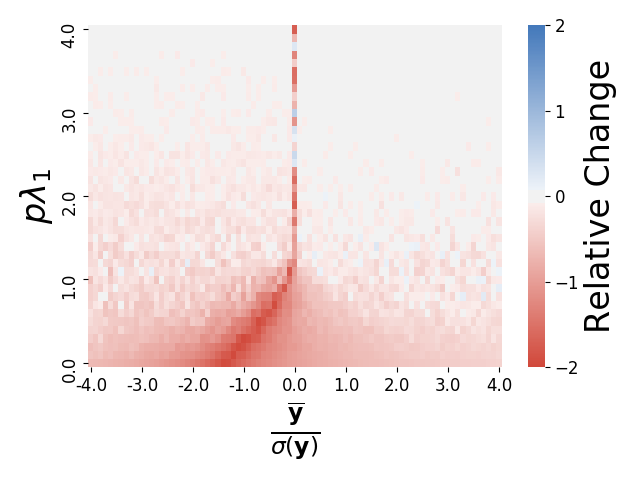} \label{fig:relative_change_katz_watts_strogatz}}
\subfloat[All models]{\includegraphics[width=.4\linewidth]{./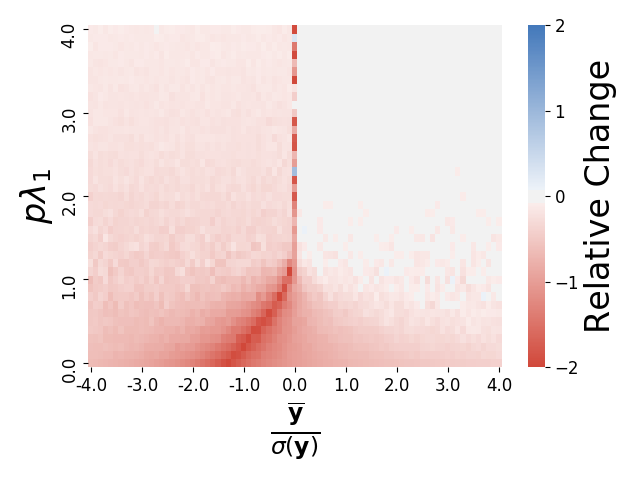}
\label{fig:relative_change_katz}}
\label{fig:perform_katz_synthetic}
\caption{Performance of Katz centrality relative to other centralities on random networks. Each plot shows the relative change, computed as $\frac{a - b}{max(|a|, |b|)}$ where $a$ is Katz centrality's average payoff and $b$ is the maximum average payoff of the other centralities, for varying values of $\frac{\overline{\mathbf{y}}}{\sigma(\mathbf{y})}$ and $p \lambda_1$.}
\end{figure*}

\clearpage

\subsection{Average approximated cascade payoff for contextual centrality and the variations of other centrality measures}

Here we present the average approximated cascade payoff for contextual centrality and the variations of other centrality measures. Note that the approximation does not hold for degree centrality when $p \lambda_1 > 1$ and $T$ is large. However, scaling degree centrality with primary contribution still improves the performance. Hence, we present it here.

\begin{figure*}[ht!]
\centering
\subfloat[degree]{\includegraphics[width=0.33\linewidth]{./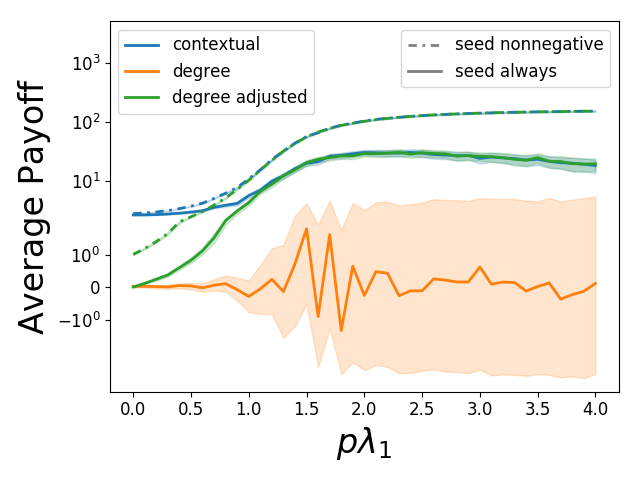}} 
\subfloat[katz]{\includegraphics[width=0.33\linewidth]{./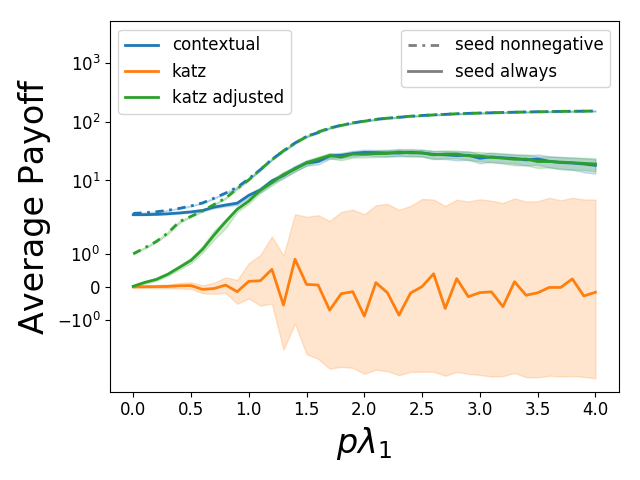}}
\subfloat[diffusion]{\includegraphics[width=0.33\linewidth]{./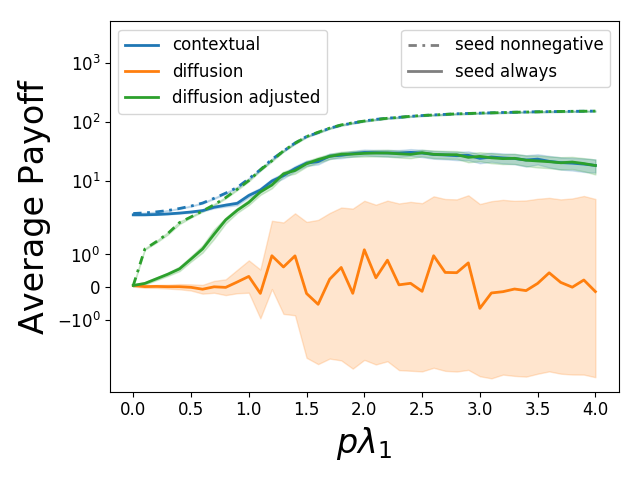}} 
\label{fig:average_payoff_all}
\caption{Average cascade payoff for contextual centrality and the variations of (a) degree, (b) diffusion, and (c) Katz centrality. The x-axis is $p \lambda_1$, and the y-axis is the average cascade payoff, with the shaded region as the 95\% confidence intervals. For `degree adjusted', `Katz adjusted' and `diffusion adjusted' centrality, we multiply the focal  centrality with the primary contribution $\mathbf{U}_1^T \mathbf{y}$. For `seed nonnegative', we adapt the original seeding strategy to seed only if the maximum of the centrality measure is nonnegative, otherwise it is named as `seed always'. }
\end{figure*}

\clearpage

\subsection{Comparision of seeding strategies when $\overline{\mathbf{y}} ( \mathbf{U}_1^T \mathbf{y}) < 0$}

Here we show the effect of using different seeding strategies on the average approximated cascade payoff. For this plot, we generated random networks as before with contributions sampled from a standard normal distribution, but redistributed the contributions to make the signs of $\overline{\mathbf{y}}$ and $\mathbf{U}_1^T \mathbf{y}$ differ if possible. More specifically, if the average contribution was negative, the individual with the largest eigenvector centrality score was given the most positive contribution, the individual with the second largest eigenvector centrality score was given the second most positive contribution, and so on. We used an analogous procedure if the average contribution was positive. Fig.~\ref{fig:average_payoff_seeding_strategies} shows that seeding according to the contextual centrality score tends to perform the best as long a $p \lambda_1$ is not too large, after which seeding according to the average contribution performs the best. For small values $p \lambda_1$, seeding always performs as well as, if not better than, seeding according to contextual centrality. As suggested by Eq.~(\ref{eq:prox_cc}), seeding according to the primary contribution yields similar results as seeding according to contextual centrality score as $p \lambda_1$ grows large.

\begin{figure}[ht!]
\centering
\includegraphics[width=.5\linewidth]{./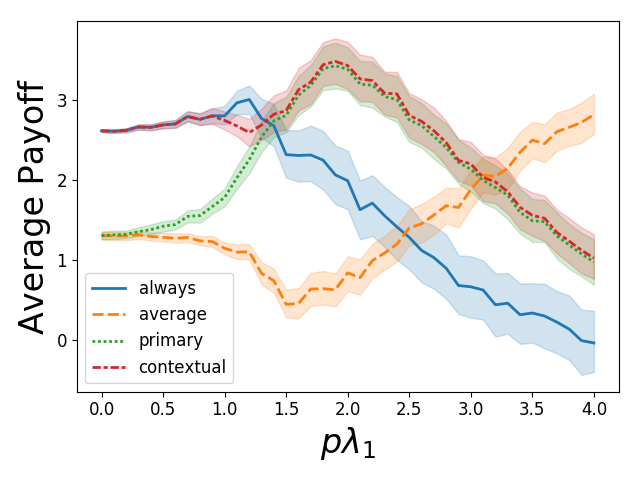}
\caption{Comparision of seeding strategies when $\overline{\mathbf{y}} ( \mathbf{U}_1^T \mathbf{y}) < 0$. Here we show the average approximated cascade payoff generated by seeding the top-ranked individual according to contextual centrality using different seeding strategies. The x-axis is $p \lambda_1$ and the y-axis is the average approximated cascade payoff with shaded 95\% confidence interval. The line marked ``always'' acts as our baseline in each we always seed the individual. For ``average'', we seed only if the average contribution is nonnegative. For ``primary'', we seed only if the primary contribution is nonnegative. For ``contextual'', we seed only if the contextual centrality score of the individual is nonnegative.} 
\label{fig:average_payoff_seeding_strategies}
\end{figure}

\end{document}